\title{Environmental fitness heterogeneity in the Moran process}
\author{Kamran Kaveh$^{\ast}$}
\author{Alex McAvoy$^{\ast}$}
\author{Martin A. Nowak}
\thanks{$^{\ast}$K. K. and A. M. contributed equally to this study. Corresponding author: Martin A. Nowak (\texttt{martin\_nowak@harvard.edu})}
\newcommand*\patchAmsMathEnvironmentForLineno[1]{%
\expandafter\let\csname old#1\expandafter\endcsname\csname #1\endcsname
\expandafter\let\csname oldend#1\expandafter\endcsname\csname end#1\endcsname
\renewenvironment{#1}%
{\linenomath\csname old#1\endcsname}%
{\csname oldend#1\endcsname\endlinenomath}}%
\newcommand*\patchBothAmsMathEnvironmentsForLineno[1]{%
\patchAmsMathEnvironmentForLineno{#1}%
\patchAmsMathEnvironmentForLineno{#1*}}%
\theoremstyle{definition}
\newtheorem{lemma}{Lemma}
\newtheorem{remark}{Remark}
\newcommand{\eq}[1]{\textbf{Eq.~\ref{eq:#1}}}
\newcommand{\fig}[1]{\textbf{Fig.~\ref{fig:#1}}}
\begin{document}

\allowdisplaybreaks

\begin{abstract}
Many mathematical models of evolution assume that all individuals experience the same environment. Here, we study the Moran process in heterogeneous environments. The population is of finite size with two competing types, which are exposed to a fixed number of environmental conditions. Reproductive rate is determined by both the type and the environment. We first calculate the condition for selection to favor the mutant relative to the resident wild type. In large populations, the mutant is favored if and only if the mutant's spatial average reproductive rate exceeds that of the resident. But environmental heterogeneity elucidates an interesting asymmetry between the mutant and the resident. Specifically, mutant heterogeneity suppresses its fixation probability; if this heterogeneity is strong enough, it can even completely offset the effects of selection (including in large populations). In contrast, resident heterogeneity has no effect on a mutant's fixation probability in large populations and can amplify it in small populations.
\end{abstract}

\maketitle

\section{Introduction}
Evolutionary dynamics deals with the appearance and competition of traits over time. The success of an initially-rare mutant arising in a population depends on a number of factors, including the population's spatial structure and the mutant's reproductive fitness relative to the resident. One quantitative measure of a mutant's success is its fixation probability, which describes the chance that the mutant's lineage will take over the population \citep{nagylaki:S:1992}. The effect of a particular property of the population (such as its spatial structure) on natural selection is often measured directly in terms of its effects on this probability of fixation. Among the many noted demographic features that affect evolutionary outcomes, comparatively little is known about the effects of environmental heterogeneity in reproductive fitness on evolutionary dynamics.

One source of interaction and migration heterogeneity is population structure. \citet{lieberman:Nature:2005} use graphs as a model for population structure and show that ``isothermal" structures do not alter fixation probabilities under birth-death updating, expanding upon a related observation for subdivided populations \citep{maruyama:TPB:1974}. Non-isothermal graphs can change this fixation probability and, in particular, act as amplifiers or suppressors of selection--a topic of considerable current interest \citep{lieberman:Nature:2005,ohtsuki:PRSB:2006,broom:PRSA:2008,patwa:JRSI:2008,broom:JIM:2009,houchmandzadeh:NJP:2011,mertzios:TCS:2013,monk:PRSA:2014,adlam:SR:2014,allen:PLOSCB:2015,adlam:PRSA:2015}. Recent work suggests that randomness in dispersal patterns yields either amplifiers or suppressors of selection \citep{antal:PRL:2006,sood:PRE:2008,manem:JTB:2014,adlam:SR:2014,hindersin:PLOSCB:2015}. Although spatial structure and frequency-dependent fitness have been incorporated into many evolutionary models, their effects on evolutionary dynamics are not fully understood. Even less is known about the effects of environmental heterogeneity, which can affect fitness through a non-uniform distribution of resources.

Despite the fact that there is still much left to be understood about the effects of environmental heterogeneity, its importance in theoretical models has long been recognized, particularly in population genetics \citep{gillespie:OUP:1991,barton:GR:1993,barton:G:1995}. More than sixty years ago, \citet{levene:AN:1953} introduced a diploid model in which two alleles are favored in different ecological niches and showed that genetic equilibrium is possible even when there is no niche in which the heterozygote is favored over both homozygotes. \citet{haldane:JG:1963} subsequently treated a temporal analogue of this fitness asymmetry, which was then incorporated into a study of polymorphism under both spatial and temporal fitness heterogeneity \citep{ewing:AN:1979}. \citet{arnold:AN:1983} described the spatial model of \citet{levene:AN:1953} as ``the beginning of theoretical ecological genetics."

Many studies of environmental heterogeneity have focused largely on metapopulation or island models under weak selective pressure, inspired by the evolution of habitat-specialist traits in heterogeneous environments \citep{levins:AN:1962,levins:AN:1963,pollak:JAP:1966,schreiber:AN:2009,vuilleumier:TPB:2010}. These metapopulation models assume connected islands (habitats) where migration is allowed between islands, and environmental heterogeneity is parametrized by a variable fitness difference between two competing types and assumed to be small (i.e. weak selection). Notably, in the limit of strong connectivity between islands, variations in fitness advantage do not affect fixation probability \citep{nagylaki:JMB:1980}. Others address the issue of fixation in two-island \citep{gavrilets:PE:2002} and multi-habitat \citep{whitlock:G:2005} models with variable fitness.

A more fine-grained heterogeneity requires an extension of the stepping-stone models to evolutionary graphs \citep{masuda:PRE:2010,hauser:JTB:2014,maciejewski:JTB:2014,maciejewski:PLoSCB:2014}. So far, much of the work in this area has been done through numerical simulations of specific structures and fitness distributions. For example, \citet{manem:PLOSONE:2015} demonstrated via death-birth simulations on a structured mesh that heterogeneity in the fitness distribution can decrease the fixation probability of a beneficial mutant. \citet{hauser:JTB:2014}, through exact calculations for small populations and simulations for larger populations, showed that heterogeneity in background fitness suppresses selection. Using an interesting analytical approach, \citet{masuda:PRE:2010} estimated the scaling behavior of the average consensus time in a voter model for random environments with uniform or power-law fitness distributions. More recently, \citet{mahdipour:SR:2017} considered a death-birth model on a cycle with random background fitness. Using numerical simulations, they observed that heterogeneity leads to an increase in fixation probability. However, in the same model, heterogeneity has also been shown to increase the time to fixation \citep{farhang:PLOSCB:2017}.

\citet{taylor:EJP:2007} distilled much of the research into heterogeneity with the remark that ``[o]ne of the key insights to emerge from population genetics theory is that the effectiveness of natural selection is reduced by random variation in individual survival and reproduction." However, beyond the fact that the Wright-Fisher model is the standard paradigm for many of these works in population genetics, results on environmental heterogeneity often rely on assumptions such as weak selection or restrictions on population structure or migration rates.

In this study, we take a different approach and consider the environmental heterogeneity in the Moran process with no restrictions on selection intensity. The Moran process models an idealized population of constant, finite size, $N$, with two competing types, $A$ and $B$ \citep{moran:MPCPS:1958}. At each time step, an individual is chosen for birth with probability proportional to reproductive fitness (which can depend on both the individual's type and the environment in which they reside), and the resulting offspring replaces a random individual in the population. One key difference between the Moran and Wright-Fisher models, which are both well-established in theoretical biology, is that generations overlap in the former but not in the latter. This aspect of the Moran model, which has been noted to result in qualitative differences in the dynamics \citep{bhaskar:B:2009,proulx:TPB:2011}, also has the added benefit of making some calculations (such as of a mutant's fixation probability) exact for the Moran process that are only approximations under Wright-Fisher updating \citep{novozhilov:BB:2006}.

We focus on the following questions for the Moran process:
\begin{itemize}
\item Can we predict the fate of a random mutant in a heterogeneous environment, given the measures of heterogeneity such as the standard deviation of mutant (and resident) fitness values?
\item Is the effect of environmental heterogeneity asymmetric with respect to the types? In other words, does variability in environmental conditions affect mutants more than residents?
\item What are the finite-population effects on fixation probability in a heterogeneous environment?
\item What is the interplay between dispersal structure and the environmental fitness distribution?
\end{itemize}

Through explicit formulas for fixation probabilities in large populations, we show that selection favors the mutant type if and only if the expected fitness of a randomly-placed mutant exceeds that of a randomly-placed resident. In other words, the mutant type is neutral relative to the resident if and only if the arithmetic mean of all possible fitness values for the mutant is the same as that of the resident. We also consider this selection condition in smaller populations, where we demonstrate how a combination of heterogeneity and drift results in a much more complicated criterion for the mutant to be favored over the resident.

More importantly, we show that mutant heterogeneity categorically suppresses selection; in particular, any such heterogeneity decreases the fixation probability of a beneficial mutant. In contrast, heterogeneity in resident fitness does not change a mutant's fixation probability when the population size is large, and it can even amplify selection in small populations. These observations uncover an asymmetry between the mutant and resident types in heterogeneous environments. Furthermore, since we impose no restrictions on selection intensity, our results highlight behavior that is difficult to see under weak heterogeneity.

\section{Model and fixation probabilities}\label{sec:model}
Consider a population of size $N$ in which each individual has one of two types, $A$ (mutant) or $B$ (resident). There are $m$ different environments in which an individual can reside, and we denote by $N_{i}$ the size of environment $i$ (meaning the number of individuals, of any type, that can reside in environment $i$) for $i=1,\dots ,m$. In environment $i$, $A$ has relative fitness $a_{i}$ and $B$ has relative fitness $b_{i}$. At each time step, an individual is chosen for reproduction with probability proportional to relative fitness. An individual subsequently dies (uniformly-at-random) and is replaced by the new offspring (see \textbf{Fig. \ref{fig:processDefinition}}).

\begin{figure}
\centering
\includegraphics[width=0.8\textwidth]{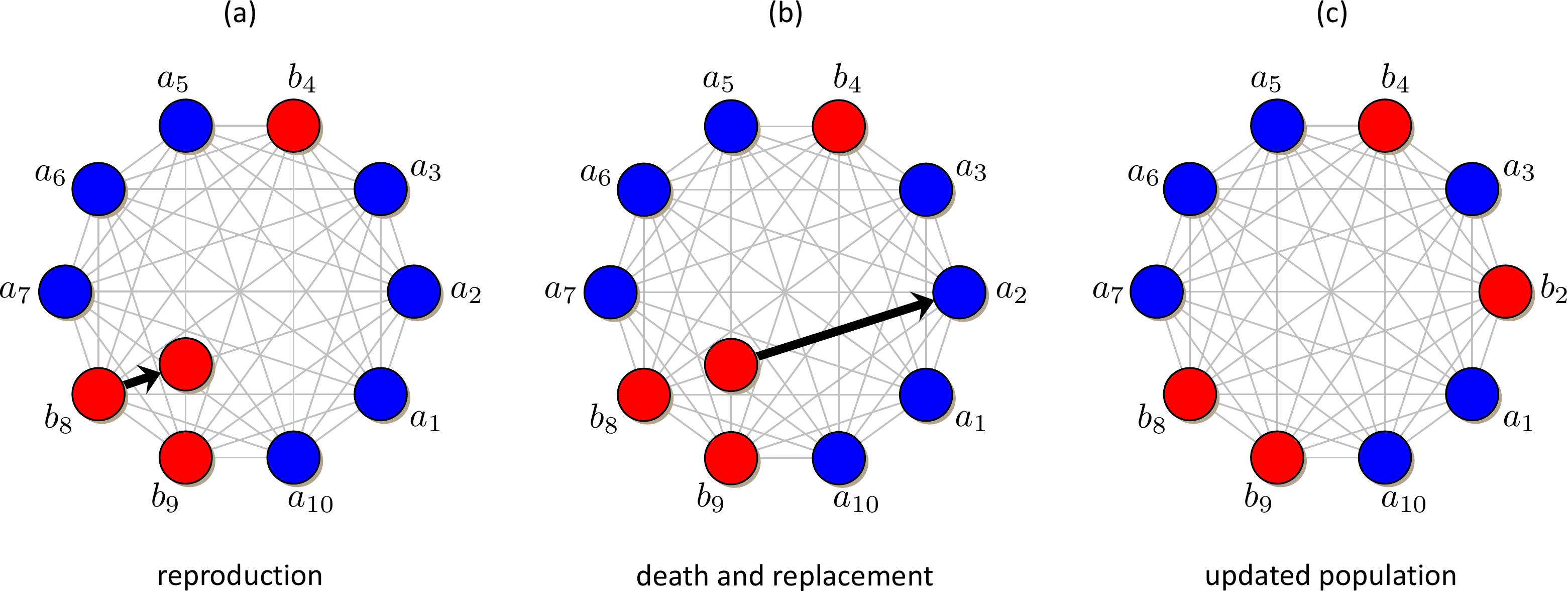}
\caption{Birth-death updating with environmental heterogeneity in reproductive fitness. At location $i$, an $A$-individual (mutant) has fitness $a_{i}$ and a $B$-individual (resident) has fitness $b_{i}$. At each time step, an individual is selected to reproduce with probability proportional to fitness; the offspring then replaces a random individual chosen for death. Here, the individual at location $8$ reproduces and its offspring replaces the individual at location $2$. Although the parent has fitness $b_{8}$, the offspring has fitness $b_{2}$ since it is in a different environment. While dispersal is determined by a complete graph (light grey), the population cannot be considered ``unstructured" since one must keep track of locations due to environmental variations in fitness (which could result from variations in resources).\label{fig:processDefinition}}
\end{figure}

The fraction of each fitness value present in the population defines mass functions, $f_{N}\left(a\right)$ and $g_{N}\left(b\right)$. That is, if there are $m$ environments with fitness values $a_{i}$ and $b_{i}$ in environment $i\in\left\{1,\dots ,m\right\}$, then
\begin{subequations}
\begin{align}
f_{N}\left(a\right) &= \begin{cases}\frac{N_{i}}{N} & a=a_{i}\textrm{ for some }i\in\left\{1,\dots ,m\right\} , \\ 0 & \textrm{otherwise} ;\end{cases} \\
g_{N}\left(b\right) &= \begin{cases}\frac{N_{i}}{N} & b=b_{i}\textrm{ for some }i\in\left\{1,\dots ,m\right\} , \\ 0 & \textrm{otherwise} .\end{cases}
\end{align}
\end{subequations}
We let $\overline{a}\coloneqq\frac{1}{N}\sum_{i=1}^{m}N_{i}a_{i}$ and $\overline{b}\coloneqq\frac{1}{N}\sum_{i=1}^{m}N_{i}b_{i}$ be the environmental fitness averages for $A$ and $B$, respectively; that is, $\overline{a}$ (resp. $\overline{b}$) is the expected fitness of a randomly-placed individual of type $A$ (resp. $B$). The classical Moran process \citep{moran:MPCPS:1958} is recovered when $a_{i}=\overline{a}$ and $b_{i}=\overline{b}$ for every $i$ (or, equivalently, when $m=1$ and $N_{1}=N$).

Without fitness heterogeneity, the state of the population is completely determined by the number of individuals of type $A$. Let $\rho_{A}^{N}$ be the probability that a single mutant ($A$), initialized uniformly-at-random in the population, fixates when the remaining $N-1$ individuals are of the resident type ($B$). Similarly, let $\rho_{B}^{N}$ be the probability that a single, randomly-placed resident ($B$) fixates in a population of $N-1$ mutants ($A$). A standard way to measure the evolutionary success of $A$ relative to $B$ is to compare $\rho_{A}^{N}$ to $\rho_{B}^{N}$. Type $A$ is favored over $B$ if $\rho_{A}^{N}>\rho_{B}^{N}$, disfavored relative to $B$ if $\rho_{A}^{N}<\rho_{B}^{N}$, and neutral relative to $B$ if $\rho_{A}^{N}=\rho_{B}^{N}$ \citep{tarnita:JTB:2009}. The equation $\rho_{A}^{N}=\rho_{B}^{N}$ is the ``neutrality condition" for fixation probability.

Suppose that $\overline{a}$ and $\overline{b}$ are the fitness values of $A$ and $B$, respectively, in the classical Moran process. Since there is no heterogeneity in the environment, one can think of $\rho_{A}^{N}=\rho_{A}^{N}\left(\overline{a},\overline{b}\right)$ and $\rho_{B}^{N}=\rho_{B}^{N}\left(\overline{a},\overline{b}\right)$ as functions of $\overline{a}$ and $\overline{b}$. Furthermore, $\rho_{B}^{N}\left(\overline{a},\overline{b}\right) =\rho_{A}^{N}\left(\overline{b},\overline{a}\right)$ since $A$ and $B$ are distinguished by only their fitness. Therefore, $A$ is neutral with respect to $B$ if and only if $\rho_{A}^{N}\left(\overline{a},\overline{b}\right) =\rho_{A}^{N}\left(\overline{b},\overline{a}\right)$. Since we know
\begin{align}
\rho_{A}^{N}\left(\overline{a},\overline{b}\right) &= \begin{cases}\frac{1-\overline{b}/\overline{a}}{1-\left(\overline{b}/\overline{a}\right)^{N}} & \overline{a}\neq\overline{b} ; \\ 1/N & \overline{a}=\overline{b}\end{cases}
\end{align}
\citep[see][]{nowak:BP:2006}, one can see that $\rho_{A}^{N}\left(\overline{a},\overline{b}\right) =\rho_{A}^{N}\left(\overline{b},\overline{a}\right)$ if and only if $\overline{a}=\overline{b}$, which makes intuitive sense because then $A$ is neutral relative to $B$ if and only if the two types are indistinguishable from a fitness standpoint.

In the Moran process with fitness heterogeneity, the fixation probability of a single $A$-individual could depend on its environment, so it is important to account for this initial environment when considering an analogue of the neutrality condition. Let $\mathbf{e}_{i}$ denote the state in which all individuals have type $B$ except for one individual of type $A$ in environment $i$. Let $\mathbf{A}$ be the monomorphic state in which all individuals have type $A$. We denote by $\rho_{\mathbf{e}_{i},\mathbf{A}}^{N}\left(f_{N},g_{N}\right)$ the probability that, when starting from this rare-mutant state, the $A$ type eventually takes over the population. Let $\rho_{A}^{N}\left(f_{N},g_{N}\right)$ be the fixation probability of an $A$-individual, averaged over all $N$ initial locations of the mutant, i.e.
\begin{align}
\rho_{A}^{N}\left(f_{N},g_{N}\right) &\coloneqq \frac{1}{N}\sum_{i=1}^{m}N_{i}\rho_{\mathbf{e}_{i},\mathbf{A}}^{N}\left(f_{N},g_{N}\right) .
\end{align}
A natural extension of the comparison of $\rho_{A}^{N}\left(a,b\right)$ to $\rho_{A}^{N}\left(b,a\right)$ is the comparison of $\rho_{A}^{N}\left(f_{N},g_{N}\right)$ to $\rho_{A}^{N}\left(g_{N},f_{N}\right)$. In other words, the neutrality condition is then defined by the equation $\rho_{A}^{N}\left(f_{N},g_{N}\right) =\rho_{A}^{N}\left(g_{N},f_{N}\right)$. We now turn to analyzing this neutrality condition for two types of populations: \textit{(i)} small populations, where drift plays a significant role in the dynamics, and \textit{(ii)} the large-population limit, where selection dominates.

\subsection{Small populations}
When $N$ is small, we cannot ignore the effects of random drift and, consequently, we do not expect the neutrality condition to be as simple as it is when $N$ is large (where one can focus on the effects of selection only). When $N=2$, there is environmental heterogeneity if there are $m=2$ environments (otherwise the model is the classical Moran process). For such a small population, it is simple to directly solve the standard recurrence equations for fixation probabilities (\ref{sec:appendixA}) to get
\begin{subequations}
\begin{align}
\rho_{A}^{N}\left(f_{N},g_{N}\right) &= \frac{1}{2}\left(\frac{a_{1}}{a_{1}+b_{2}}+\frac{a_{2}}{a_{2}+b_{1}}\right) ; \\
\rho_{A}^{N}\left(g_{N},f_{N}\right) &= \frac{1}{2}\left(\frac{b_{1}}{b_{1}+a_{2}}+\frac{b_{2}}{b_{2}+a_{1}}\right) .
\end{align}
\end{subequations}
The neutrality condition in this case is equivalent to $a_{1}a_{2}=b_{1}b_{2}$ (i.e. $\sqrt{a_{1}a_{2}}=\sqrt{b_{1}b_{2}}$).

On the other hand, even $N=3$ demonstrates how the neutrality condition quickly gets complicated for small values of $N$ greater than $2$. Again, for $N=3$, we can solve directly for fixation probabilities, $\rho$, but their expressions are complicated and not especially easy to interpret. Under the simplifying assumption $b_{1}=b_{2}=b_{3}=1$, the neutrality condition is equivalent to
\begin{align}
0 &= 6a_{1}^{3}a_{2}^{2}a_{3}+4a_{1}^{3}a_{2}^{2}+6a_{1}^{3}a_{2}a_{3}^{2}+14a_{1}^{3}a_{2}a_{3}+5a_{1}^{3}a_{2}+4a_{1}^{3}a_{3}^{2}+5a_{1}^{3}a_{3} \nonumber \\
&\quad +6a_{1}^{2}a_{2}^{3}a_{3}+4a_{1}^{2}a_{2}^{3}+12a_{1}^{2}a_{2}^{2}a_{3}^{2}+34a_{1}^{2}a_{2}^{2}a_{3}+14a_{1}^{2}a_{2}^{2}+6a_{1}^{2}a_{2}a_{3}^{3} \nonumber \\
&\quad +34a_{1}^{2}a_{2}a_{3}^{2}+41a_{1}^{2}a_{2}a_{3}+4a_{1}^{2}a_{3}^{3}+14a_{1}^{2}a_{3}^{2}-16a_{1}^{2}+6a_{1}a_{2}^{3}a_{3}^{2}+14a_{1}a_{2}^{3}a_{3} \nonumber \\
&\quad +5a_{1}a_{2}^{3}+6a_{1}a_{2}^{2}a_{3}^{3}+34a_{1}a_{2}^{2}a_{3}^{2}+41a_{1}a_{2}^{2}a_{3}+14a_{1}a_{2}a_{3}^{3}+41a_{1}a_{2}a_{3}^{2} \nonumber \\
&\quad -49a_{1}a_{2}+5a_{1}a_{3}^{3}-49a_{1}a_{3}-56a_{1}+4a_{2}^{3}a_{3}^{2}+5a_{2}^{3}a_{3}+4a_{2}^{2}a_{3}^{3} \nonumber \\
&\quad +14a_{2}^{2}a_{3}^{2}-16a_{2}^{2}+5a_{2}a_{3}^{3}-49a_{2}a_{3}-56a_{2}-16a_{3}^{2}-56a_{3}-48 .
\end{align}

For larger (but still finite $N$), the neutrality condition grows only more complicated. Therefore, in the following section, we turn to analyzing this neutrality condition in the large-population limit.

\subsection{Large-population limit}
Suppose that $m$ is fixed and that the size of environment $i$ is a function of the overall population size, $N$, and that there exists $\left(p_{1},\dots ,p_{m}\right)\in\left(0,1\right)^{m}$ such that environment $i$ satisfies $\lim_{N\rightarrow\infty}\frac{N_{i}\left(N\right)}{N}=p_{i}$ for every $i=1,\dots ,m$. (Note that $N_{i}$ can be an arbitrary function of $N$ as long as it is positive, integer-valued, and satisfies $\lim_{N\rightarrow\infty}\frac{N_{i}\left(N\right)}{N}=p_{i}\in\left(0,1\right)$.) Under this assumption, the mass functions $f_{N}$ and $g_{N}$ have well-defined limits, $f\coloneqq\lim_{N\rightarrow\infty}f_{N}$ and $g\coloneqq\lim_{N\rightarrow\infty}g_{N}$, respectively. Let $\overline{a}=\sum_{i=1}^{m}p_{i}a_{i}$ and $\overline{b}=\sum_{i=1}^{m}p_{i}b_{i}$ be the mean fitness values of the mutant type and the resident type, respectively, with respect to these distributions.

Let $\mathbb{E}_{f}$ denote the expectation with respect to the mass function $f$. We show in \ref{sec:appendixA} that, when we take $N\rightarrow\infty$, the limiting value of the fixation probability of a randomly-placed mutant, $\rho_{A}^{\infty}\left(f,g\right) \coloneqq\lim_{N\rightarrow\infty}\rho_{A}^{N}\left(f_{N},g_{N}\right)$, satisfies the following equation:
\begin{subequations}\label{eq:integralEquation}
\begin{align}
\rho_{A}^{\infty}\left(f,g\right) &= 0 \quad\textrm{if}\quad \overline{a}\leqslant\overline{b} ; \label{eq:integralEquationA} \\
\mathbb{E}_{f}\left[ \frac{a}{\overline{b}+a\rho_{A}^{\infty}\left(f,g\right)} \right] &= 1 \quad\textrm{if}\quad \overline{a}>\overline{b} . \label{eq:integralEquationB}
\end{align}
\end{subequations}
Therefore, $\rho_{A}^{\infty}\left(f,g\right) =\rho_{A}^{\infty}\left(g,f\right)$ if and only if $\overline{a}=\overline{b}$, which gives the neutrality condition for large populations.

From the neutrality condition for large populations, we also obtain conditions for selection to favor or disfavor the mutant type: $A$ is favored relative to $B$ if and only if $\overline{a}>\overline{b}$, and $A$ is disfavored relative to $B$ if and only if $\overline{a}<\overline{b}$. Therefore, the performance of one type relative to another can be deduced from the classical (homogeneous) model by replacing each location's fitness values, $a_{i}$ and $b_{i}$, by the spatial averages, $\overline{a}$ and $\overline{b}$. Although one can make a rough comparison of two types by looking at their mean fitness values, we show in the next section that mutant heterogeneity acts further as a suppressor of selection.

\section{Heterogeneity in mutant fitness}
In this section, we look at what happens to an invading mutant's fixation probability if its heterogeneous fitness values are replaced by their spatial average. Note that there is no heterogeneity in mutant (resp. resident) fitness if $f\left(\overline{a}\right) =1$ (resp. $g\left(\overline{b}\right) =1$). If either of these conditions holds, then we replace $f$ by $\overline{a}$ (resp. $g$ by $\overline{b}$) in the notation $\rho_{A}^{\infty}\left(f,g\right)$. For example, $\rho_{A}^{\infty}\left(f,\overline{b}\right)$ denotes the limiting value of $A$'s fixation probability when \textit{(i)} the fitness of $A$ is distributed according to $f$ and \textit{(ii)} every resident type has fitness exactly $\overline{b}$ (i.e. there is no resident heterogeneity). The first thing to notice is that, from \eq{integralEquation}, we have $\rho_{A}^{\infty}\left(f,g\right) =\rho_{A}^{\infty}\left(f,\overline{b}\right)$, so environmental heterogeneity of the resident does not affect the fixation probability of the mutant in the large-population limit. We next turn to how $\rho_{A}^{\infty}\left(f,g\right)$ compares to $\rho_{A}^{\infty}\left(\overline{a},g\right)$:

\begin{figure}
\centering
\includegraphics[width=0.8\textwidth]{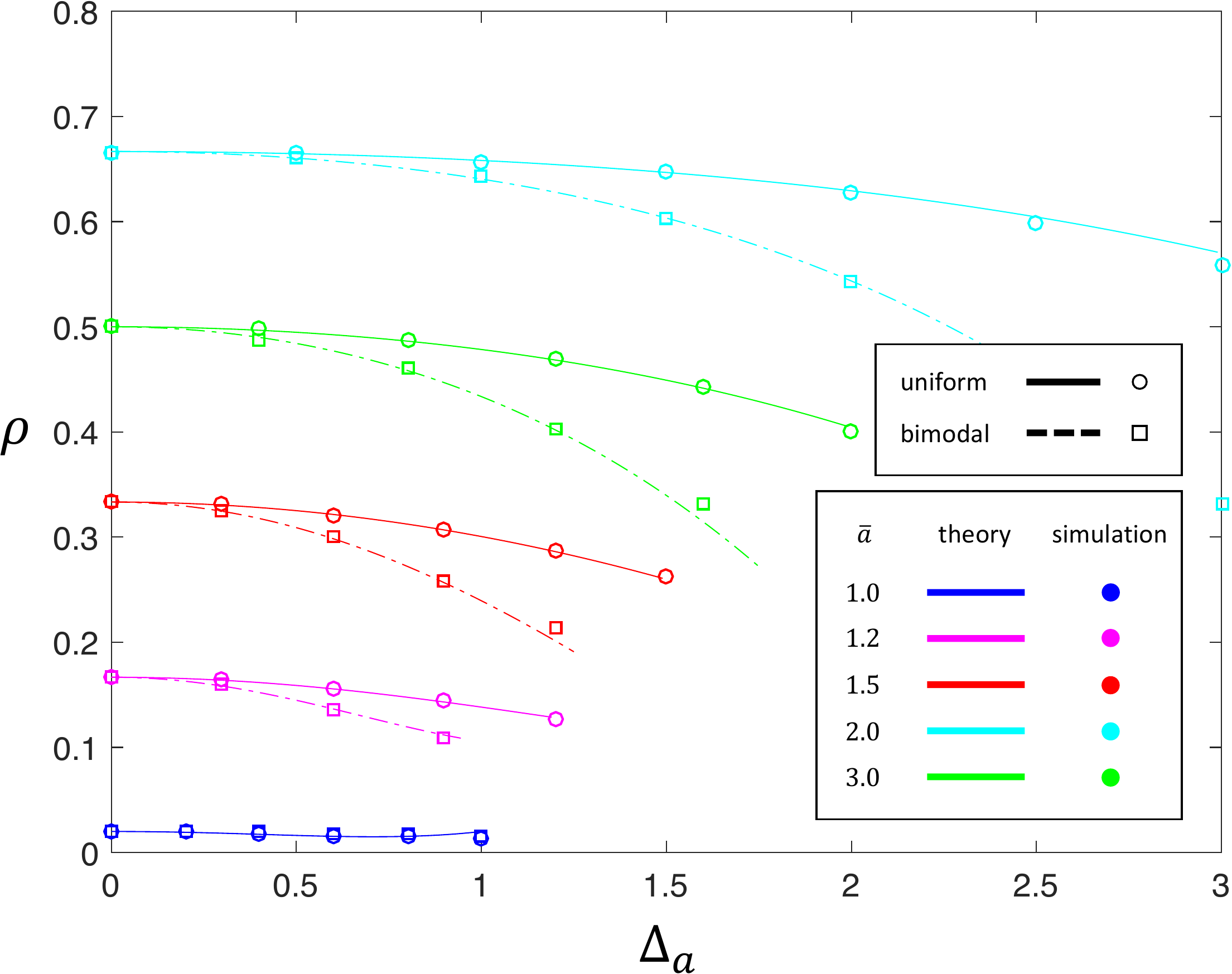}
\caption{Fixation probability of the mutant type, $A$, as a function of (half) the width of the mutant fitness distribution, $\Delta_{a}$. The fitness values for the mutant and resident are uniformly distributed on $\left[\overline{a}-\Delta_{a},\overline{a}+\Delta_{a}\right]$ and $\left[\overline{b}-\Delta_{b},\overline{b}+\Delta_{b}\right]$, respectively (solid line). Similarly, for a bimodal distribution, the fitness values for the mutant and resident are $\overline{a}-\Delta_{a}$ or $\overline{a}+\Delta_{a}$ and $\overline{b}-\Delta_{b}$ and $\overline{b}+\Delta_{b}$, respectively, each with probability $1/2$ (dashed lines). These values, $\Delta_{a}$ and $\Delta_{b}$, are measures of mutant and resident heterogeneity, respectively. The population size is $N=50$ and the solid/dashed lines indicate the analytical predictions from \textbf{Eq. \ref{eq:taylor_heterogeneity}}. As $\Delta_{a}$ grows, a beneficial mutant's fixation probability decreases. However, this fixation probability does not change as $\Delta_{b}$ varies (not shown in the figure).\label{fig:suppressedSelection}}
\end{figure}

\subsection{Effects on selection}
For fixed $f$ and $g$ with $\overline{a}>\overline{b}$, consider the function
\begin{align}
\psi &: \left[0,\infty\right) \longrightarrow \left[0,\infty\right) \nonumber \\
&: a \longmapsto \frac{a}{\overline{b}+a\rho_{A}^{\infty}\left(f,g\right)} .
\end{align}
Since $\psi$ is strictly concave whenever $\overline{a}>\overline{b}$, it follows from Jensen's inequality that
\begin{align}
1 &= \mathbb{E}_{f}\left[\psi\right] \leqslant \psi\left(\overline{a}\right) = \frac{\overline{a}}{\overline{b}+\overline{a}\rho_{A}^{\infty}\left(f,g\right)} ,
\end{align}
with equality if and only if there is no mutant heterogeneity (i.e. $f\left(\overline{a}\right) =1$). Therefore, if $\overline{a}>\overline{b}$, then $\rho_{A}^{\infty}\left(\overline{a},g\right) =1-\overline{b}/\overline{a}$, and we see that $\rho_{A}^{\infty}\left(f,g\right)\leqslant 1-\overline{b}/\overline{a}=\rho_{A}^{\infty}\left(\overline{a},g\right)$ with equality if and only if $f\left(\overline{a}\right) =1$. Thus, heterogeneity in the fitness of an advantageous mutant decreases its fixation probability (\fig{suppressedSelection}).

\subsection{Moment expansion of fixation probability}
Here, we discuss expansions for the fixation probability in the limit of weak heterogeneity. Let $f'$ and $g'$ be mass functions on $\mathbb{R}$, supported on the points $a_{1}',\dots ,a_{m}'\in\mathbb{R}$ and $b_{1}',\dots ,b_{m}'\in\mathbb{R}$, respectively. Suppose that $\mathbb{E}_{f'}\left[a'\right] =\mathbb{E}_{g'}\left[b'\right] =0$ (where, again, $\mathbb{E}_{f'}\left[a'\right]$ and $\mathbb{E}_{g'}\left[b'\right]$ denote the mean values of the random variables distributed according to $f'$ and $g'$, respectively). For $0<\varepsilon\ll 1$ and fixed $\overline{a},\overline{b}>0$ with $\overline{a}>\overline{b}$, consider the mass functions
\begin{subequations}
\begin{align}
f^{\left(\varepsilon\right)}\left(a\right) &\coloneqq f'\left(\frac{a-\overline{a}}{\varepsilon}\right) ; \\
g^{\left(\varepsilon\right)}\left(b\right) &\coloneqq g'\left(\frac{b-\overline{b}}{\varepsilon}\right) .
\end{align}
\end{subequations}
These functions are supported on the points $\left\{\overline{a}+\varepsilon a_{i}'\right\}_{i=1}^{m}$ and $\left\{\overline{b}+\varepsilon b_{i}'\right\}_{i=1}^{m}$, respectively.

Consider the series expansion of $\rho_{A}^{\infty}\left(f^{\left(\varepsilon\right)},g^{\left(\varepsilon\right)}\right)$ in terms of $\varepsilon$,
\begin{align}
\rho_{A}^{\infty}\left(f^{\left(\varepsilon\right)},g^{\left(\varepsilon\right)}\right) &= c_{0} + c_{1}\varepsilon + c_{2} \varepsilon^{2}+ c_{3}\varepsilon^{3}+c_{4}\varepsilon^{4} + \mathcal{O}\left(\varepsilon^{5}\right) .
\end{align}
We can solve for $c_{0}, c_{1},\dots ,c_{4}$ using a perturbative expansion of \eq{integralEquationB},
\begin{align}
1 &= \mathbb{E}_{f'}\left[ \frac{\left(\overline{a}+\varepsilon a'\right)}{\overline{b}+\left(\overline{a}+\varepsilon a'\right)\rho_{A}^{\infty}\left(f^{\left(\varepsilon\right)},g^{\left(\varepsilon\right)}\right)} \right] ,
\end{align}
and matching the coefficients for different powers of $\varepsilon$ up to $\varepsilon^{4}$. Since $\mathbb{E}_{f'}\left[a'\right] =0$, we see that
\begin{subequations}
\begin{align}
c_{0} &= 1-\frac{\overline{b}}{\overline{a}} ; \\
c_{1} &= 0 ; \\
c_{2} &= -\left(1-\frac{\overline{b}}{\overline{a}}\right)\left(\frac{\overline{b}}{\overline{a}^{3}}\right)\mathbb{E}_{f'}\left[\left(a'\right)^{2}\right] ; \\
c_{3} &= \left(1-\frac{\overline{b}}{\overline{a}}\right)\left(\frac{\overline{b}\left(\overline{a}-\overline{b}\right)}{\overline{a}^{5}}\right)\mathbb{E}_{f'}\left[\left(a'\right)^{3}\right] ; \\
c_{4} &= -\left(1-\frac{\overline{b}}{\overline{a}}\right)\left\{ \left(\frac{\overline{b}^{2}\left(\overline{a}-2\overline{b}\right)}{\overline{a}^{7}}\right)\mathbb{E}_{f'}\left[\left(a'\right)^{2}\right]^{2}-\left(\frac{\overline{b}\left(\overline{a}-\overline{b}\right)^{2}}{\overline{a}^{7}}\right)\mathbb{E}_{f'}\left[\left(a'\right)^{4}\right] \right\} .
\end{align}
\end{subequations}
Therefore, using the fact that $\rho_{A}^{\infty}\left(\overline{a},\overline{b}\right) =1-\overline{b}/\overline{a}$, we have
\begin{align}
\rho_{A}^{\infty}\left(f^{\left(\varepsilon\right)},g^{\left(\varepsilon\right)}\right) &\approx \rho_{A}^{\infty}\left(\overline{a},\overline{b}\right)\Bigg\{ 1 - \left(\frac{\overline{b}}{\overline{a}^{3}}\right)\mathbb{E}_{f'}\left[\left(a'\right)^{2}\right]\varepsilon^{2} + \left(\frac{\overline{b}\left(\overline{a}-\overline{b}\right)}{\overline{a}^{5}}\right)\mathbb{E}_{f'}\left[\left(a'\right)^{3}\right]\varepsilon^{3} \nonumber \\
&\quad\quad  -\left(\frac{\overline{b}^{2}\left(\overline{a}-2\overline{b}\right)}{\overline{a}^{7}}\right)\mathbb{E}_{f'}\left[\left(a'\right)^{2}\right]^{2}\varepsilon^{4}-\left(\frac{\overline{b}\left(\overline{a}-\overline{b}\right)^{2}}{\overline{a}^{7}}\right)\mathbb{E}_{f'}\left[\left(a'\right)^{4}\right]\varepsilon^{4} \Bigg\} .\label{sieq:taylorExpansion}
\end{align}
For symmetric distributions, the odd moments cancel, and this expansion can be simplified even further.

If $\rho_{A}^{\infty}\left(\overline{a},\overline{b}\right)$ is the fixation probability in the uniform (homogeneous) system, then it follows that one can approximate a mutant's fixation probability in the heterogeneous model using the expansion
\begin{align}
\rho_{A}^{\infty}\left(f,g\right) &\approx \rho_{A}^{\infty}\left(\overline{a},\overline{b}\right)\Bigg\{ 1 - \left(\frac{\overline{b}}{\overline{a}^{3}}\right)\mathbb{E}_{f}\left[\left(a-\overline{a}\right)^{2}\right] + \left(\frac{\overline{b}\left(\overline{a}-\overline{b}\right)}{\overline{a}^{5}}\right)\mathbb{E}_{f}\left[\left(a-\overline{a}\right)^{3}\right] \nonumber \\
&\quad\quad  -\left(\frac{\overline{b}^{2}\left(\overline{a}-2\overline{b}\right)}{\overline{a}^{7}}\right)\mathbb{E}_{f}\left[\left(a-\overline{a}\right)^{2}\right]^{2}-\left(\frac{\overline{b}\left(\overline{a}-\overline{b}\right)^{2}}{\overline{a}^{7}}\right)\mathbb{E}_{f}\left[\left(a-\overline{a}\right)^{4}\right] \Bigg\} . \label{eq:taylor_heterogeneity}
\end{align}
\textbf{Fig. \ref{fig:expansion}} demonstrates that this expansion is in excellent agreement with the simulation data.

\begin{figure}
\centering
\includegraphics[width=0.8\textwidth]{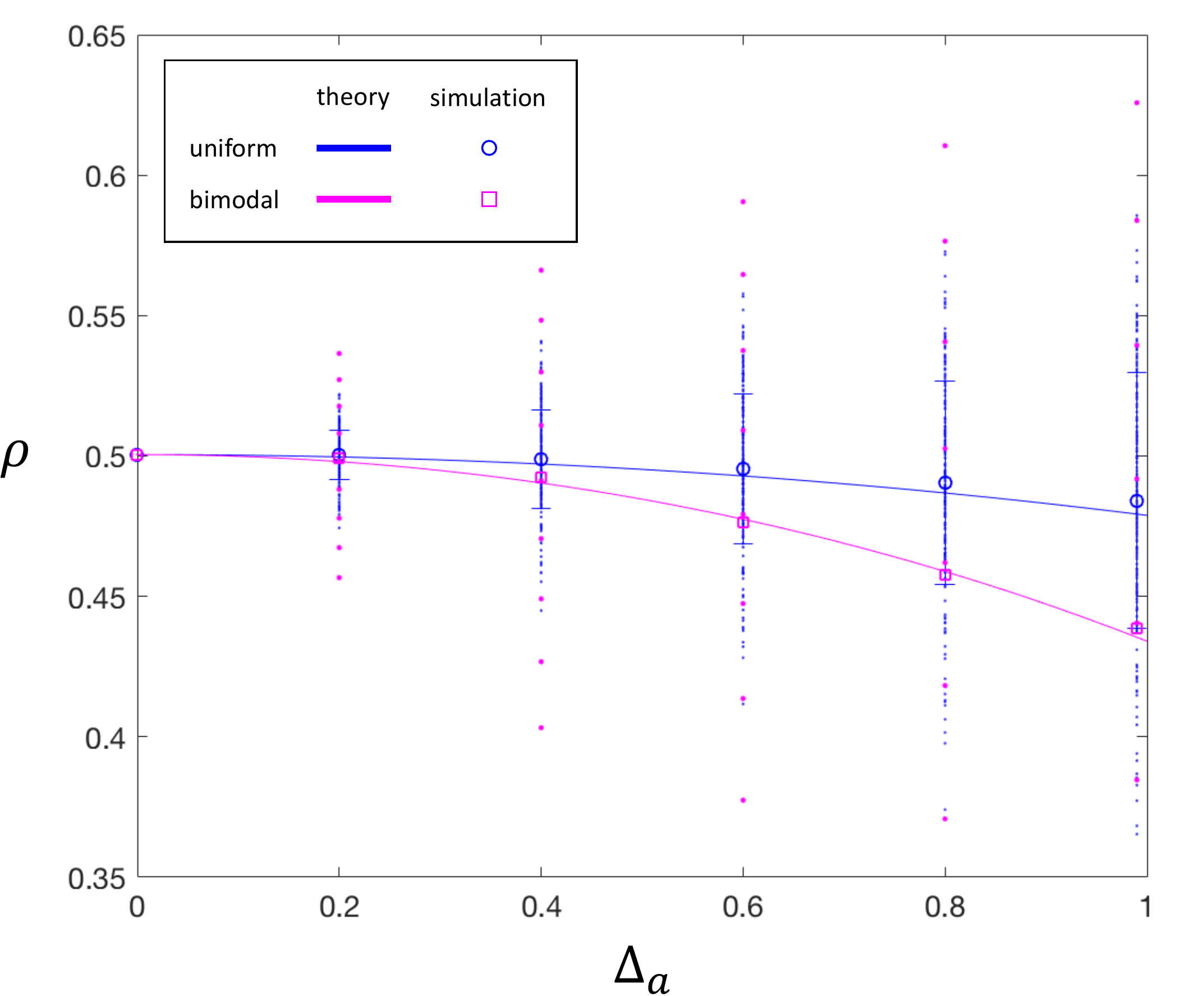}
\caption{The fixation probability of a randomly-placed mutant in a heterogeneous environment as function of mutant fitness width, $\Delta_{a}$. Fixation probabilities for each random configuration are derived from a given probability distribution with mean $\bar{a}$ and width $\Delta_{a}$. A bimodal distribution is shown in magenta and a uniform distribution is shown in blue. Small dots indicate the fixation probability in each random environmental configuration. (For each value $\Delta_{a}$, approximately 200 configurations are generated with $N=10$, $\bar{a}=2$, and $\bar{b}=1$.) The averaged fixation probability, depicted by a circle or a square, is in excellent agreement with the analytical results (\textbf{Eq.~\ref{eq:taylor_heterogeneity}}). For simplicity, fitness heterogeneity is assumed to apply only to mutants.\label{fig:expansion}}
\end{figure}

In \ref{sec:appendixB}, we show that altering the dispersal patterns can enhance this suppression effect. In other words, if an offspring can replace only certain individuals (instead of any other member of the population), then heterogeneity in mutant fitness further suppresses a rare mutant's fixation probability. In the case of a cycle with a spatially-periodic fitness distribution, the fixation probability approaches zero when the heterogeneity in mutant fitness approaches its maximal values (see \textbf{Fig.~\ref{fig:cycleSuppression}} in \ref{sec:appendixB}).

\section{Heterogeneity in resident fitness}
Although environmental heterogeneity of the resident is irrelevant when the population size is sufficiently large, it can have an effect on fixation probability for small population sizes. In most cases, this effect (which is of order $1/N$) can be ignored, but we observe that for small population sizes, and in particular near neutrality ($\overline{a}=\overline{b}$), heterogeneity in resident fitness values can amplify a mutant's fixation probability. One example of this amplification effect is presented in \textbf{Fig. \ref{fig:amplifiedSelection}}, where $\overline{a}$ is close to $1$ and $\mathbf{b}$ is distributed uniformly on $\left[\overline{b}-\Delta_{b},\overline{b}+\Delta_{b}\right]$, where $\overline{b}=1$. A second, bimodal distribution is also tested, with fitness values randomly chosen from two values, $\overline{b}-\Delta_{b}$ or $\overline{b}+\Delta_{b}$. In both cases, we observe that fixation probability is increased for near-neutral mutants. However, fixation probability is increased for both on-average beneficial and on-average deleterious mutations, which indicates that the mechanism of amplification is somewhat different from that of an amplifier of selection on evolutionary graphs (for example, a star graph). We also varied \textit{both} mutant and resident fitness; the heat map in \textbf{Fig. \ref{fig:heatmap}} summarizes the effects on fixation probability.

\begin{figure}
\centering
\includegraphics[width=0.8\textwidth]{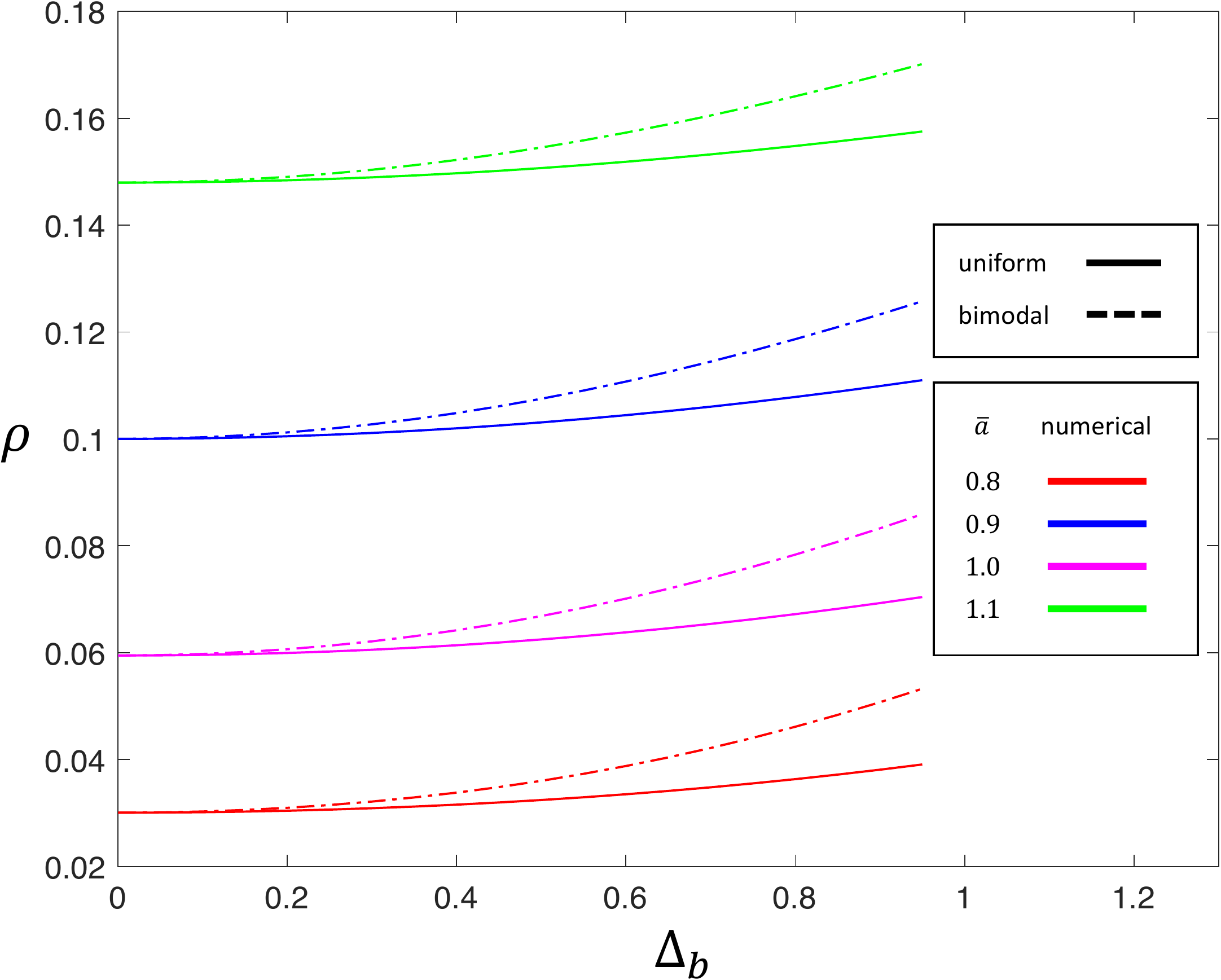}
\caption{Fixation probability of the mutant type, $A$, as a function of (half) the width of the resident fitness distribution, $\Delta_{b}$. The fitness values for the resident are uniformly distributed on $\left[\overline{b}-\Delta_{b},\overline{b}+\Delta_{b}\right]$ (solid line), where $\overline{b}=1$. Similarly, for a bimodal distribution, the fitness values for the resident are either $\overline{b}-\Delta_{b}$ or $\overline{b}+\Delta_{b}$, each chosen with probability $1/2$ (dashed lines). The population size is $N=10$, and $\overline{a}=0.8,0.9,1.0$ and $1.1$ (without any mutant fitness heterogeneity). The results are obtained from exact solutions of the Kolmogorov equation for the fixation probability. As $\Delta_{b}$ grows, a near-neutral mutant's fixation probability increases, consistent with amplification.\label{fig:amplifiedSelection}}
\end{figure}

\begin{figure}
\centering
\includegraphics[width=0.8\textwidth]{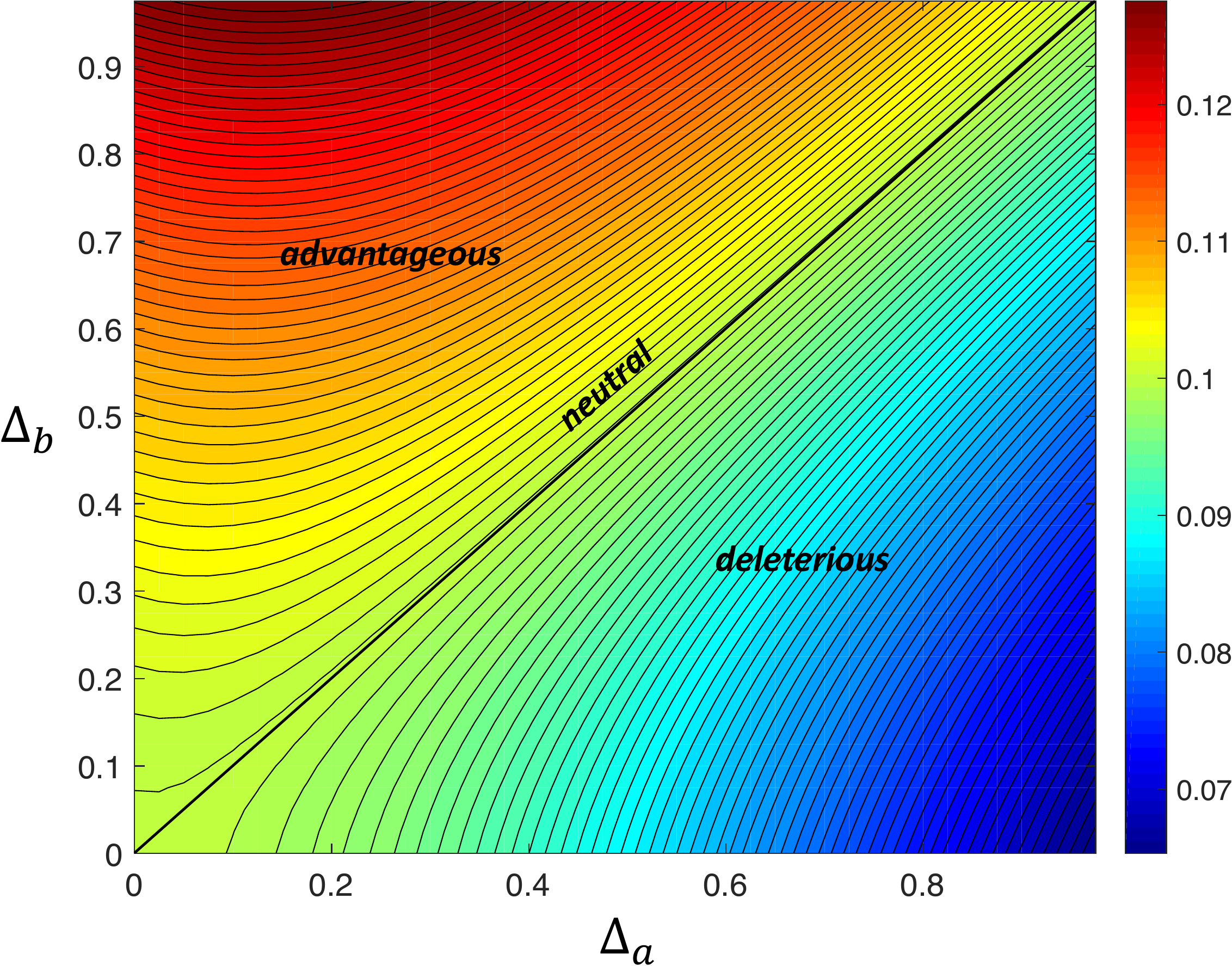}
\caption{Heat map for the fixation probability of the mutant type, $A$, as a function of (half) the width of the mutant fitness distribution, $\Delta_{a}$, and that of the resident fitness distribution, $\Delta_{b}$. The fitness values for the mutant and resident are uniformly distributed on $\left[\overline{a}-\Delta_{a},\overline{a}+\Delta_{a}\right]$ and $\left[\overline{b}-\Delta_{b},\overline{b}+\Delta_{b}\right]$, respectively, where $\overline{a}=\overline{b}=1$. The population size is $N=10$, and the results are obtained from numerical solutions to the Kolmogorov equation.\label{fig:heatmap}}
\end{figure}

\textbf{Fig.~\ref{fig:criticalN}} illustrates how these amplification effects change with population size, $N$. Once again, we show in \ref{sec:appendixB} that non-well-mixed dispersal patterns can further enhance the amplifying effects of heterogeneity in resident fitness. In the case of a cycle with spatially-periodic fitness values, an increase in the standard deviation of resident fitness leads to an even more noticeable increase in fixation probability (see \textbf{Fig.~\ref{fig:cycleAmplification}} in \ref{sec:appendixB}).

\begin{figure}
\centering
\includegraphics[width=0.8\textwidth]{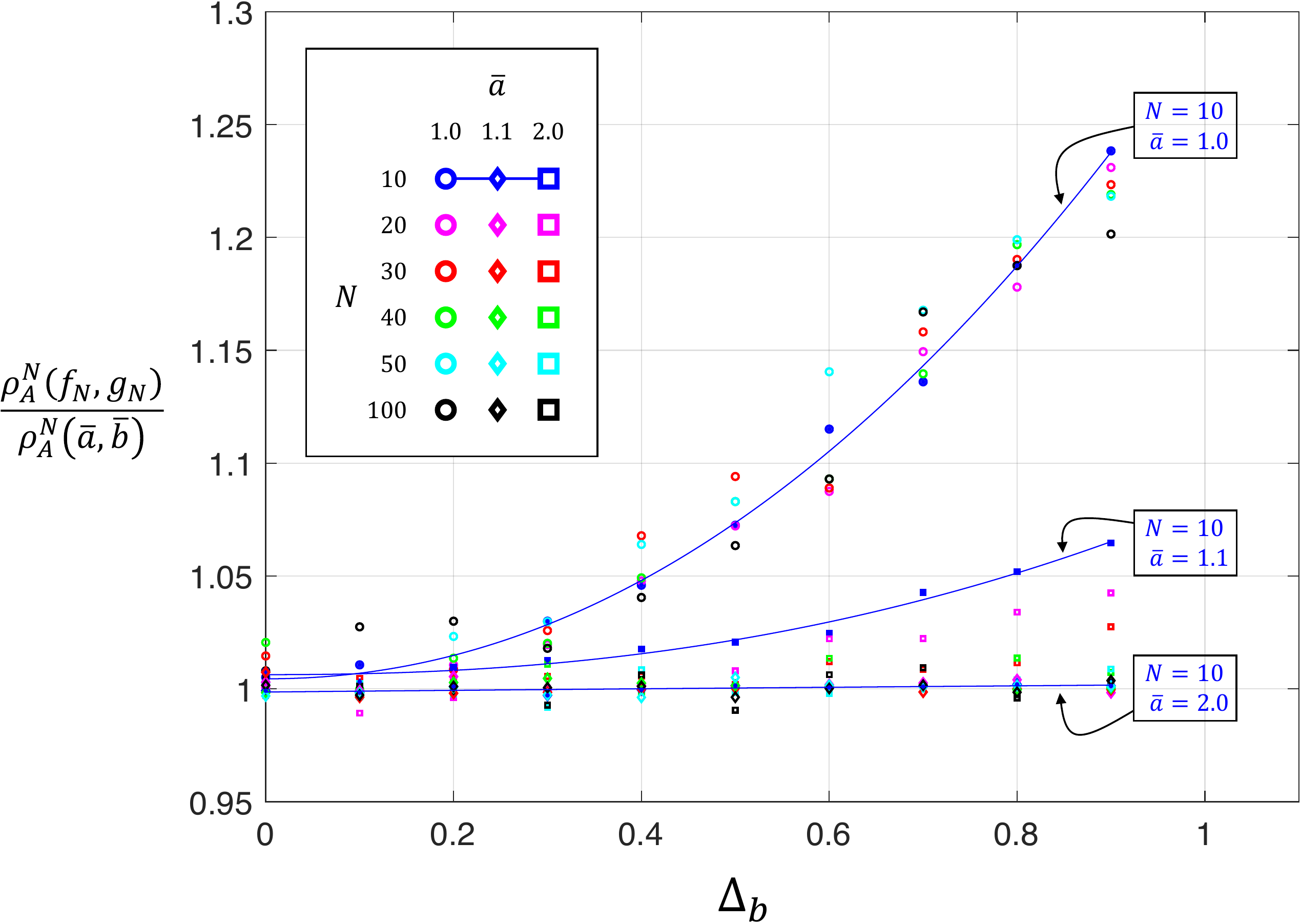}
\caption{Fixation probability for various population sizes, $N$, and mutant-fitness averages, $\bar{a}$. The heterogeneity is on resident fitness, using a uniform distribution function, i.e. $1-\Delta_{b} \leqslant b_{i} \leqslant 1+\Delta_{b}$. The fixation probability is normalized to that of uniform population with $\Delta_{b}=0$. The shapes (circle/diamond/square) indicate simulation results for various $\bar{a}$, and solid lines are interpolations for $N=10$ and $\bar{a}\in\left\{1,1.1,2\right\}$.\label{fig:criticalN}}
\end{figure}

\section{Discussion}
The Moran process has been studied extensively in structured populations, but spatial structure in this context usually pertains to the dispersal patterns of offspring following reproduction \citep{lieberman:Nature:2005,broom:JSTP:2011,frean:PRSB:2013,diaz:PRSA:2013,adlam:SR:2014,maciejewski:PLoSCB:2014,adlam:PRSA:2015,jamiesonlane:JTB:2015}. Other models use two graphs, with one ``interaction" graph pertaining to the payoffs that determine fitness and one ``dispersal" graph determining the propagation of offspring \citep{ohtsuki:PRL:2007,taylor:Nature:2007,ohtsuki:JTB:2007,pacheco:PLoSCB:2009,debarre:NC:2014}. The model of heterogeneity considered here is similar to these two-graph models since it allows for an environment-structured population yet has independent dispersal patterns. However, the structure of the environments cannot be captured by the same kind of interaction graph typically used in evolutionary game theory. Instead, the environments can be modeled by coloring the nodes of the dispersal graph, with one color for each distinct environment. The fitness of an individual is then determined by both the node's color and the individual's type. We discuss briefly in \ref{sec:appendixB} the dynamics on a cycle, which is a linear, periodic dispersal structure.

In heterogeneous environments, we find that there is a notable asymmetry between the mutant and resident types. Any variation in mutant fitness acts as a suppressor of selection. In particular, mutant heterogeneity decreases the fixation probability of beneficial mutants and increases the fixation probability of deleterious mutants. Resident heterogeneity, on the other hand, has no effect on a mutant's fixation probability in large populations and can even amplify it in small populations. Our finding differs from what is seen in processes with dispersal heterogeneity, which can amplify or suppress selection but need not do either \citep{adlam:PRSA:2015,hindersin:PLOSCB:2015,pavlogiannis:SR:2017}.

While the neutrality condition admits a simple interpretation when the population is large (i.e. the types have the same expected fitness; $\overline{a}=\overline{b}$), we do not expect this condition to be quite as intuitive for smaller population sizes. For smaller $N$, stochastic effects are stronger, and the neutrality condition is complicated by the interplay between natural selection and random drift; in large populations, selection becomes the primary effect. Even when $N=3$, we have seen that the neutrality condition is already quite complex.

Other kinds of fitness averages also arise in studies of environmental heterogeneity. In a two-allele model with ecological variation, the condition for the maintenance of a protected polymorphism is stated in terms of the harmonic mean of the fitness values \citep{levene:AN:1953}. If heterogeneity is temporal rather than environmental \citep{cvijovic:PNAS:2015}, then the mean in this condition is geometric \citep{haldane:JG:1963}. The approach we take here is somewhat different from these studies because we are focused instead on the contrast between two types under environmental heterogeneity. Furthermore, we treat a haploid Moran model, which has not been studied as extensively as diploid models with random mating--at least with regard to environmental fitness heterogeneity.

Heterogeneity, in its many and varied forms, is commonplace in evolving populations. Our focus here is on environmental fitness heterogeneity that can arise, for example, from spatial fluctuations in the availability of resources. Although mutant heterogeneity always suppresses selection and resident heterogeneity can amplify selection, it would be interesting to understand its interaction with other asymmetries such as those induced by spatial structure. In particular, how a combination of fitness and dispersal heterogeneity influences selection is poorly understood and represents an interesting topic for future research.

\setcounter{section}{0}
\renewcommand{\thesection}{Appendix~\Alph{section}}
\renewcommand{\thesubsection}{\Alph{section}.\arabic{subsection}}
\renewcommand{\thesubsubsection}{\Alph{section}.\arabic{subsection}.\arabic{subsubsection}}

\section{Fixation probabilities in heterogeneous environments}\label{sec:appendixA}
In this section, we establish an asymptotic formula for fixation probabilities in a heterogeneous environment. The population consists of $m$ different environments, whose only role is to determine reproductive fitness of the two types, $A$ and $B$. In environment $i$, which contains $N_{i}$ individuals, the $A$-type (resp. $B$-type) has fitness $a_{i}$ (resp. $b_{i}$), where $a_{i},b_{i}>0$. Once the fitness of each individual is determined, the process is updated as described in \S\ref{sec:model} via a Moran process in an unstructured population of size $N=N_{1}+\cdots +N_{m}$.

\subsection{State space and transition probabilities}
When the environment influences fitness, there are two possible notions of ``state space." One could simply track the trait of every individual in the population, which would result in the ``full" state space, $\left\{A,B\right\}^{N}$. However, since individuals within the same environment are indistinguishable, we instead use the ``reduced" state space, $S\coloneqq\left\{0,1,\dots ,N_{1}\right\}\times\cdots\times\left\{0,1,\dots ,N_{m}\right\}$. An element $\mathbf{n}=\left(n_{1},\dots ,n_{m}\right)\in S$ indicates the state in which there are $n_{i}$ individuals of type $A$ in environment $i$ for $i=1,\dots ,m$. For $\mathbf{n}\in S$, we denote the overall number of $A$-type individuals in $\mathbf{n}$ by $\left|\mathbf{n}\right|\coloneqq n_{1}+\cdots +n_{m}$.

Since we discuss what happens to fixation probabilities as the population size grows, we need a way to parametrize the population by only its size, $N$. Therefore, we assume that there are $m$ environments with $A$-fitness given by $a_{1},\dots ,a_{m}$ and $B$-fitness given by $b_{1},\dots ,b_{m}$. The size of environment $i$, $N_{i}$, is a function of $N$ with $N_{1}\left(N\right) +\cdots +N_{m}\left(N\right) =N$ for every $N\geqslant 1$. We assume that there exist $p_{1},\dots ,p_{m}\in\left(0,1\right)$ with
\begin{align}\label{sieq:patchFraction}
\lim_{N\rightarrow\infty}\frac{N_{i}\left(N\right)}{N} &= p_{i}	
\end{align}
for $i=1,\dots ,m$; one can think of environment $i$ as constituting a fixed, non-zero fraction, $p_{i}$, of the population. Since the dispersal structure is the same as that of an unstructured population, $N$ then completely specifies the population structure and the nature of fitness heterogeneity. This approach involves choosing a sequence of environment-structured populations, each with exactly $m$ environments, which is similar to how one uses sequences of populations to define a general notion of an amplifier of selection \citep{adlam:PRSA:2015}.

For $i=1,\dots ,m$, let $P_{i}^{+}\left(\mathbf{n}\right)$ (resp. $P_{i}^{-}\left(\mathbf{n}\right)$) be the probability that the number of mutants in environment $i$ goes up (resp. down) in the next update step, given that the current state is $\mathbf{n}$. In other words, $P_{i}^{+}\left(\mathbf{n}\right)$ is the probability that $n_{i}$ becomes $n_{i}+1$, and $P_{i}^{-}\left(\mathbf{n}\right)$ is the probability that $n_{i}$ becomes $n_{i}-1$. By the definition of the Moran process,
\begin{subequations}\label{sieq:transitions}
\begin{align}
P_{i}^{-}\left(\mathbf{n}\right) &= \left(\frac{\sum_{j=1}^{m}b_{j}\left(N_{j}-n_{j}\right)}{\sum_{j=1}^{m}\Big(a_{j}n_{j}+b_{j}\left(N_{j}-n_{j}\right)\Big)}\right) \left(\frac{n_{i}}{N}\right) ; \\
P_{i}^{+}\left(\mathbf{n}\right) &= \left(\frac{\sum_{j=1}^{m}a_{j}n_{j}}{\sum_{j=1}^{m}\Big(a_{j}n_{j}+b_{j}\left(N_{j}-n_{j}\right)\Big)}\right) \left(\frac{N_{i}-n_{i}}{N}\right) .
\end{align}
\end{subequations}
We denote by $\left\{X_{t}\right\}_{t\geqslant 0}$ the discrete-time Markov chain on $S$ generated by these transition probabilities.

However, in analyzing fixation probabilities, we may instead consider the chain $\left\{Y_{t}\right\}_{t\geqslant 0}$ on $S$ in which, for $i=1,\dots ,m$ and $\mathbf{n}\in S$ with $0<\left|\mathbf{n}\right| <N$, the transition probabilities are defined by the respective mutant-loss and mutant-gain probabilities,
\begin{subequations}\label{sieq:scaledTransitions}
\begin{align}
Q_{i}^{-}\left(\mathbf{n}\right) &\coloneqq \frac{P_{i}^{-}\left(\mathbf{n}\right)}{\sum_{j=1}^{m}\left(P_{j}^{-}\left(\mathbf{n}\right) +P_{j}^{+}\left(\mathbf{n}\right)\right)} \nonumber \\
&= \left(\frac{\left|\mathbf{n}\right|\sum_{j=1}^{m}b_{j}\left(N_{j}-n_{j}\right)}{\left|\mathbf{n}\right|\sum_{j=1}^{m}b_{j}\left(N_{j}-n_{j}\right) +\left(N-\left|\mathbf{n}\right|\right)\sum_{j=1}^{m}a_{j}n_{j}}\right)\left(\frac{n_{i}}{\left|\mathbf{n}\right|}\right) ; \\
Q_{i}^{+}\left(\mathbf{n}\right) &\coloneqq \frac{P_{i}^{+}\left(\mathbf{n}\right)}{\sum_{j=1}^{m}\left(P_{j}^{-}\left(\mathbf{n}\right) +P_{j}^{+}\left(\mathbf{n}\right)\right)} \nonumber \\
&= \left(\frac{\left(N-\left|\mathbf{n}\right|\right)\sum_{j=1}^{m}a_{j}n_{j}}{\left|\mathbf{n}\right|\sum_{j=1}^{m}b_{j}\left(N_{j}-n_{j}\right) +\left(N-\left|\mathbf{n}\right|\right)\sum_{j=1}^{m}a_{j}n_{j}}\right)\left(\frac{N_{i}-n_{i}}{N-\left|\mathbf{n}\right|}\right) .
\end{align}
\end{subequations}
The two monomorphic states of this chain, $\mathbf{n}=N$ and $\mathbf{n}=0$ (which we denote by $\mathbf{A}$ and $\mathbf{B}$, respectively), are absorbing. Note that this chain can still be described in terms of births and replacements. Specifically, a resident birth occurs with probability $\frac{\left|\mathbf{n}\right|\sum_{j=1}^{m}b_{j}\left(N_{j}-n_{j}\right)}{\left|\mathbf{n}\right|\sum_{j=1}^{m}b_{j}\left(N_{j}-n_{j}\right) +\left(N-\left|\mathbf{n}\right|\right)\sum_{j=1}^{m}a_{j}n_{j}}$, and the offspring replaces a mutant in environment $i$ with probability $\frac{n_{i}}{\left|\mathbf{n}\right|}$; a mutant birth occurs with probability $\frac{\left(N-\left|\mathbf{n}\right|\right)\sum_{j=1}^{m}a_{j}n_{j}}{\left|\mathbf{n}\right|\sum_{j=1}^{m}b_{j}\left(N_{j}-n_{j}\right) +\left(N-\left|\mathbf{n}\right|\right)\sum_{j=1}^{m}a_{j}n_{j}}$, and the offspring replaces a resident in environment $i$ with probability $\frac{N_{i}-n_{i}}{N-\left|\mathbf{n}\right|}$.

That the fixation probabilities are the same in $\left\{X_{t}\right\}_{t\geqslant 0}$ and $\left\{Y_{t}\right\}_{t\geqslant 0}$ can be seen from their recurrence relations. Specifically, if $\mathbf{n}_{i}^{-}$ (resp. $\mathbf{n}_{i}^{+}$) denotes the state obtained from $\mathbf{n}$ by changing $n_{i}$ to $n_{i}+1$ (resp. $n_{i}-1$), and if $\rho_{\mathbf{n},\mathbf{A}}$ is the probability of reaching the all-$A$ state, $\mathbf{A}$, when the process starts in state $\mathbf{n}$, then
\begin{align}
\rho_{\mathbf{n},\mathbf{A}} &= \left(1-\sum_{i=1}^{m}P_{i}^{-}\left(\mathbf{n}\right) -\sum_{i=1}^{m}P_{i}^{+}\left(\mathbf{n}\right)\right)\rho_{\mathbf{n},\mathbf{A}} + \sum_{i=1}^{m}P_{i}^{-}\left(\mathbf{n}\right)\rho_{\mathbf{n}_{i}^{-},\mathbf{A}} + \sum_{i=1}^{m}P_{i}^{+}\left(\mathbf{n}\right)\rho_{\mathbf{n}_{i}^{+},\mathbf{A}} \nonumber \\
&\iff \rho_{\mathbf{n},\mathbf{A}} = \sum_{i=1}^{m}Q_{i}^{-}\left(\mathbf{n}\right)\rho_{\mathbf{n}_{i}^{-},\mathbf{A}} + \sum_{i=1}^{m}Q_{i}^{+}\left(\mathbf{n}\right)\rho_{\mathbf{n}_{i}^{+},\mathbf{A}}
\end{align}
\citep[see][]{kemeny:S:1960}. Therefore, in what follows we analyze the chain $\left\{Y_{t}\right\}_{t\geqslant 0}$ for simplicity.

\subsection{Limiting process}
From \textbf{Eq.~\ref{sieq:patchFraction}}, for every fixed $\mathbf{n}=\left(n_{1},\dots ,n_{m}\right)\in\left\{0,1,\dots\right\}^{m}$, there exists $N^{\ast}$ sufficiently large such that $n_{i}\leqslant N_{i}\left(N\right)$ for each $i=1,\dots ,m$ whenever $N\geqslant N^{\ast}$. The expected fitness of a randomly-placed individual of type $A$ (resp. $B$) in the limit is then $\overline{a}\coloneqq\lim_{N\rightarrow\infty}\frac{1}{N}\sum_{i=1}^{m}N_{i}a_{i}=\sum_{i=1}^{m}p_{i}a_{i}$ (resp. $\overline{b}\coloneqq\lim_{N\rightarrow\infty}\frac{1}{N}\sum_{i=1}^{m}N_{i}b_{i}=\sum_{i=1}^{m}p_{i}b_{i}$). Therefore, we have the following limits:
\begin{subequations}\label{sieq:Qtransitions}
\begin{align}
\widetilde{Q}_{i}^{-}\left(\mathbf{n}\right) &\coloneqq \lim_{N\rightarrow\infty} Q_{i}^{-}\left(\mathbf{n}\right) = \frac{n_{i}\overline{b}}{\left|\mathbf{n}\right|\overline{b}+\sum_{j=1}^{m}n_{j}a_{j}} ; \\
\widetilde{Q}_{i}^{+}\left(\mathbf{n}\right) &\coloneqq \lim_{N\rightarrow\infty} Q_{i}^{+}\left(\mathbf{n}\right) = \frac{p_{i}\sum_{j=1}^{m}n_{j}a_{j}}{\left|\mathbf{n}\right|\overline{b}+\sum_{j=1}^{m}n_{j}a_{j}} .
\end{align}
\end{subequations}
Of course, when $\mathbf{n}=\mathbf{0}$, we have $\widetilde{Q}_{i}^{\pm}\left(\mathbf{0}\right) =0$ (i.e. $\mathbf{0}$ is the only absorbing state of the process).

In other words, this limit defines a Markov chain, $\left\{\widetilde{Y}_{t}\right\}_{t\geqslant 0}$ on $\left\{0,1,\dots\right\}^{m}$ with transition probabilities given by $\widetilde{Q}_{i}^{\pm}\left(\mathbf{n}\right)$ for $\mathbf{n}\in\left\{0,1,\dots\right\}^{m}$. The probability of staying in the same state, $\mathbf{n}$, is $0$ in this chain, unless $\mathbf{n}=\mathbf{0}$ (which is an absorbing state). In fact, we can ignore $B$ entirely and think of the Markov chain defined by $\widetilde{Q}$ as giving transition probabilities in an $m$-type population of variable size, where, in state $\mathbf{n}$, the number of individuals of type $i$ in the population is $n_{i}$ and the size of the population is $\left|\mathbf{n}\right|$. When an individual of type $i$ gives birth, the offspring acquires type $j$ with probability $p_{j}$ (which, notably, is independent of $i$ and thus is the same for all birth events).

Fix $\mathbf{n}$ and $\ell >0$ with $\left|\mathbf{n}\right| <\ell$. Consider the problem of finding the probability, $\mathcal{E}_{\mathbf{n}}^{\ell}$, of hitting $\mathbf{0}$ (extinction) before hitting a state with at least $\ell$ individuals. The extinction probabilities satisfy the equation
\begin{align}\label{sieq:extinctionRecurrence}
\mathcal{E}_{\mathbf{n}}^{\ell} &= \sum_{i=1}^{m} \widetilde{Q}_{i}^{-}\left(\mathbf{n}\right) \mathcal{E}_{\mathbf{n}_{i}^{-}}^{\ell} + \sum_{i=1}^{m} \widetilde{Q}_{i}^{+}\left(\mathbf{n}\right) \mathcal{E}_{\mathbf{n}_{i}^{+}}^{\ell} ,
\end{align}
with boundary conditions $\mathcal{E}_{\mathbf{n}}^{\ell}=1$ if $\left|\mathbf{n}\right| =0$ and $\mathcal{E}_{\mathbf{n}}^{\ell}=0$ if $\left|\mathbf{n}\right| =\ell$.

For any $\gamma_{1},\dots ,\gamma_{m}\in\mathbb{R}$, let $\bm{\gamma}_{\mathbf{n}}\coloneqq\gamma_{1}^{n_{1}}\cdots\gamma_{m}^{n_{m}}$. Suppose that $\gamma_{1}^{\ast},\dots ,\gamma_{m}^{\ast}>0$ satisfy
\begin{align}\label{sieq:gammaMartingale}
\bm{\gamma}_{\mathbf{n}}^{\ast} &= \sum_{i=1}^{m} \widetilde{Q}_{i}^{-}\left(\mathbf{n}\right) \bm{\gamma}_{\mathbf{n}_{i}^{-}}^{\ast} + \sum_{i=1}^{m} \widetilde{Q}_{i}^{+}\left(\mathbf{n}\right) \bm{\gamma}_{\mathbf{n}_{i}^{+}}^{\ast} .
\end{align}
To find $\gamma_{1}^{\ast},\dots ,\gamma_{m}^{\ast}>0$ satisfying \textbf{Eq.~\ref{sieq:gammaMartingale}}, we first consider the case in which $\mathbf{n}=\mathbf{e}_{i}$ for some $i=1,\dots ,m$, where $\mathbf{e}_{i}$ denotes the state with $n_{i}=1$ and $n_{j}=0$ for $j\neq i$. With $\overline{\gamma^{\ast}}\coloneqq\sum_{i=1}^{m}p_{i}\gamma_{i}^{\ast}$, \textbf{Eq.~\ref{sieq:gammaMartingale}} reads
\begin{align}
\gamma_{i}^{\ast} &= \frac{\overline{b}}{\overline{b}+a_{i}} + \frac{a_{i}}{\overline{b}+a_{i}} \gamma_{i}^{\ast}\overline{\gamma^{\ast}} .
\end{align}
Solving for $\gamma_{i}^{\ast}$ then gives
\begin{align}\label{sieq:gammaInfty}
\gamma_{i}^{\ast} &= \frac{\overline{b}}{\overline{b}+a_{i}\left(1-\overline{\gamma^{\ast}}\right)} ,
\end{align}
where $\overline{\gamma^{\ast}}$ satisfies the equation
\begin{align}\label{sieq:gammaBarInfty}
1-\overline{\gamma^{\ast}} &= \left(1-\overline{\gamma^{\ast}}\right)\sum_{i=1}^{m} p_{i} \left(\frac{a_{i}}{\overline{b}+a_{i}\left(1-\overline{\gamma^{\ast}}\right)}\right) .
\end{align}

\begin{remark}
If $f\left(a\right)$ is the mass function defined by $f\left(a_{i}\right) =p_{i}$ for $i=1,\dots ,m$, with $f\left(a\right) =0$ whenever $a\neq a_{1},\dots ,a_{m}$, then the summation in \textbf{Eq.~\ref{sieq:gammaBarInfty}} is simply the expectation $\mathbb{E}_{f}\left[\frac{a}{\overline{b}+a\left(1-\overline{\gamma^{\ast}}\right)}\right]$.
\end{remark}

The following lemma characterizes the values of $\overline{\gamma^{\ast}}$ that satisfy \textbf{Eq.~\ref{sieq:gammaBarInfty}}:
\begin{lemma}\label{lem:overlineAB}
If $\overline{a}\leqslant\overline{b}$, then the only solution to \textbf{Eq.~\ref{sieq:gammaBarInfty}} in the interval $\left[0,1\right]$ is $\overline{\gamma^{\ast}}=1$. If $\overline{a}>\overline{b}$, then there are exactly two distinct solutions to \textbf{Eq.~\ref{sieq:gammaBarInfty}}: one at $\overline{\gamma^{\ast}}=1$ and another with $0<\overline{\gamma^{\ast}}<1$.
\end{lemma}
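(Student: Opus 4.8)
The plan is to reduce \textbf{Eq.~\ref{sieq:gammaBarInfty}} to a one-variable root-finding problem. Setting $x\coloneqq 1-\overline{\gamma^{\ast}}$, the equation factors as $x = x\,h(x)$, where
\begin{align}
h(x) &\coloneqq \sum_{i=1}^{m} p_{i}\,\frac{a_{i}}{\overline{b}+a_{i}x} = \mathbb{E}_{f}\!\left[\frac{a}{\overline{b}+ax}\right] .
\end{align}
A value $\overline{\gamma^{\ast}}\in\left[0,1\right]$ solves \textbf{Eq.~\ref{sieq:gammaBarInfty}} exactly when $x\in\left[0,1\right]$ and either $x=0$ (which corresponds to $\overline{\gamma^{\ast}}=1$) or $h(x)=1$. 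So it suffices to count the roots of $h(x)=1$ with $x\in\left(0,1\right]$ and then add back the root $x=0$, which is present regardless of the sign of $\overline{a}-\overline{b}$.

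Next I would record the relevant analytic properties of $h$ on $\left[0,\infty\right)$, where it is smooth since $a_{i},\overline{b}>0$. Differentiating term by term, $h'(x) = -\sum_{i=1}^{m} p_{i} a_{i}^{2}/\left(\overline{b}+a_{i}x\right)^{2} < 0$, so $h$ is strictly decreasing. The boundary values are $h(0) = \sum_{i=1}^{m} p_{i} a_{i}/\overline{b} = \overline{a}/\overline{b}$ and $h(1) = \sum_{i=1}^{m} p_{i}\,a_{i}/\left(\overline{b}+a_{i}\right)$; since each summand $a_{i}/\left(\overline{b}+a_{i}\right)$ is strictly less than $1$ and the $p_{i}$ sum to $1$, we get $h(1)<1$ unconditionally.

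Then the two cases fall out. If $\overline{a}\leqslant\overline{b}$, then $h(0)=\overline{a}/\overline{b}\leqslant 1$, and strict monotonicity forces $h(x)<1$ for all $x>0$; hence $h(x)=1$ has no root in $\left(0,1\right]$ and $x=0$ (i.e. $\overline{\gamma^{\ast}}=1$) is the only solution in $\left[0,1\right]$. If $\overline{a}>\overline{b}$, then $h(0)=\overline{a}/\overline{b}>1>h(1)$, so the intermediate value theorem yields a root $x^{\ast}\in\left(0,1\right)$, and strict monotonicity makes it the unique root of $h(x)=1$ on all of $\left(0,\infty\right)$; together with $x=0$ this gives exactly two solutions in $\left[0,1\right]$, namely $\overline{\gamma^{\ast}}=1$ and $\overline{\gamma^{\ast}}=1-x^{\ast}\in\left(0,1\right)$, which are distinct because the strict inequalities at the endpoints guarantee $0<x^{\ast}<1$.

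I would note that there is no real obstacle here: the one slightly delicate point is purely bookkeeping---the factored equation $x=x\,h(x)$ always carries the root $x=0$, so one must phrase the conclusion as ``$\overline{\gamma^{\ast}}=1$ together with the roots of $h=1$'' rather than conflate the two, and one must use the strict inequality $h(1)<1$ to prevent the interior root from collapsing onto an endpoint. Strict convexity of $h$ (from $h''>0$) is available but unnecessary, since strict monotonicity alone determines the count.
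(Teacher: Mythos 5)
Your proof is correct and follows essentially the same route as the paper: the same change of variables $x=1-\overline{\gamma^{\ast}}$, the same decreasing function of $x$, the same boundary value $\overline{a}/\overline{b}$ at $x=0$, and the intermediate value theorem to produce the interior root when $\overline{a}>\overline{b}$. The only differences are cosmetic: you bound $h(1)<1$ term by term where the paper invokes Jensen's inequality (your version is slightly more elementary), and your explicit derivative computation makes the strict monotonicity---and hence the uniqueness of the interior root---a bit more airtight than the paper's statement that $\xi$ is ``monotonically decreasing.''
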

\begin{proof}
We first note that $\overline{\gamma^{\ast}}=1$ is always a solution to \textbf{Eq.~\ref{sieq:gammaBarInfty}}. Consider the change of variables $x\coloneqq 1-\overline{\gamma^{\ast}}$. The function $\xi\left(x\right)\coloneqq\sum_{i=1}^{m} p_{i} \left(\frac{a_{i}}{\overline{b}+a_{i}x}\right)$ is monotonically decreasing in $x$ with $\xi\left(0\right) =\overline{a}/\overline{b}$. As a result, when $\overline{a}\leqslant\overline{b}$, we have $\xi\left(0\right)\leqslant 1$ and $\xi\left(x\right) <1$ when $x>0$. Suppose now that $\overline{a}>\overline{b}$. Since the function $a\mapsto\frac{a}{\overline{b}+ax}$ is concave in $a$ whenever $a,x>0$, it follows from Jensen's inequality that $\xi\left(x\right)\leqslant\frac{\overline{a}}{\overline{b}+\overline{a}x}$ for every $x>0$. Therefore, $\xi\left(1\right)\leqslant\frac{\overline{a}}{\overline{b}+\overline{a}}<1$, and since $\xi$ is continuous in $x$ on $\left[0,\infty\right)$ with $\xi\left(0\right) >1$, there exists $x\in\left(0,1\right)$ for which $\xi\left(x\right) =1$ by virtue of the intermediate value theorem, which completes the proof.
\end{proof}

Although we chose $\gamma_{1}^{\ast},\dots ,\gamma_{m}^{\ast}$ that satisfy \textbf{Eq.~\ref{sieq:gammaMartingale}} for $\mathbf{n}=\mathbf{e}_{i}$, these values actually satisfy this equation for any $\mathbf{n}$. To see why, note first that $\bm{\gamma}_{\mathbf{n}_{i}^{+}}^{\ast}=\gamma_{i}^{\ast}\bm{\gamma}_{\mathbf{n}}^{\ast}=\gamma_{i}^{\ast}\gamma_{j}^{\ast}\bm{\gamma}_{\mathbf{n}_{j}^{-}}^{\ast}$ and $\overline{b}+a_{i}\gamma_{i}^{\ast}\overline{\gamma^{\ast}}=\left(\overline{b}+a_{i}\right)\gamma_{i}^{\ast}$ for every $i,j=1,\dots ,m$. Therefore,
\begin{align}
\sum_{i=1}^{m} n_{i}\overline{b} \bm{\gamma}_{\mathbf{n}_{i}^{-}}^{\ast} + \sum_{i=1}^{m} p_{i}\sum_{j=1}^{m}n_{j}a_{j} \bm{\gamma}_{\mathbf{n}_{i}^{+}}^{\ast} &= \sum_{i=1}^{m} n_{i}\overline{b} \bm{\gamma}_{\mathbf{n}_{i}^{-}}^{\ast} + \sum_{j=1}^{m}n_{j}a_{j}\sum_{i=1}^{m} p_{i} \bm{\gamma}_{\mathbf{n}_{i}^{+}}^{\ast} \nonumber \\
&= \sum_{i=1}^{m} n_{i}\overline{b} \bm{\gamma}_{\mathbf{n}_{i}^{-}}^{\ast} + \sum_{i=1}^{m}n_{i}a_{i}\gamma_{i}^{\ast}\overline{\gamma^{\ast}}\bm{\gamma}_{\mathbf{n}_{i}^{-}}^{\ast} \nonumber \\
&= \sum_{i=1}^{m}n_{i}\left(\overline{b} + a_{i}\gamma_{i}^{\ast}\overline{\gamma^{\ast}}\right)\bm{\gamma}_{\mathbf{n}_{i}^{-}}^{\ast} \nonumber \\
&= \sum_{i=1}^{m}n_{i}\left(\overline{b}+a_{i}\right)\gamma_{i}^{\ast}\bm{\gamma}_{\mathbf{n}_{i}^{-}}^{\ast} \nonumber \\
&= \left(\left|\mathbf{n}\right|\overline{b}+\sum_{j=1}^{m}a_{i}n_{i}\right)\bm{\gamma}_{\mathbf{n}}^{\ast} ,
\end{align}
which establishes \textbf{Eq.~\ref{sieq:gammaMartingale}}.

From \textbf{Eq.~\ref{sieq:gammaMartingale}}, we see that $\mathbb{E}\left[\bm{\gamma}_{\widetilde{Y}_{t+1}}^{\ast}\ :\ \widetilde{Y}_{t}=\mathbf{n}\right] =\bm{\gamma}_{\mathbf{n}}^{\ast}$ for every $\mathbf{n}$, meaning $\left\{\bm{\gamma}_{\widetilde{Y}_{t}}^{\ast}\right\}_{t\geqslant 0}$ is a Martingale with respect to $\left\{\widetilde{Y}_{t}\right\}_{t\geqslant 0}$. Consider the stopping time $\tau_{\ell}\coloneqq\min\left\{t\geqslant 0\ :\ \left|\widetilde{Y}_{t}\right| =\ell\right\}$, and let $\left\{\widetilde{Y}_{t}^{\tau_{\ell}}\right\}_{t\geqslant 0}$ be the stopped chain defined by $\widetilde{Y}_{t}^{\tau_{\ell}}\coloneqq\widetilde{Y}_{\min\left\{t,\tau_{\ell}\right\}}$. For any $\mathbf{n}$ with $\left|\mathbf{n}\right|\leqslant\ell$, it follows trivially from the Martingale property that $\mathbb{E}\left[\bm{\gamma}_{\widetilde{Y}_{t+1}^{\tau_{\ell}}}^{\ast}\ :\ \widetilde{Y}_{t}^{\tau_{\ell}}=\mathbf{n}\right] =\bm{\gamma}_{\mathbf{n}}^{\ast}$. Taking the limit of this equation as $t\rightarrow\infty$ \citep[see][]{monk:PRSA:2014}, we have
\begin{align}
\bm{\gamma}_{\mathbf{n}}^{\ast} &= \mathcal{E}_{\mathbf{n}}^{\ell} + \sum_{\substack{\mathbf{n}'\in\left\{0,1,\dots ,\right\}^{m} \\ \left|\mathbf{n}'\right| =\ell}}\mathbb{P}\left[ \widetilde{Y}_{\tau_{\ell}}=\mathbf{n}'\ :\ \widetilde{Y}_{0}=\mathbf{n} ,\ \tau_{\ell}<\infty \right]\bm{\gamma}_{\mathbf{n}'}^{\ast} .
\end{align}
Since $\sum_{\substack{\mathbf{n}'\in\left\{0,1,\dots ,\right\}^{m} \\ \left|\mathbf{n}'\right| =\ell}}\mathbb{P}\left[ \widetilde{Y}_{\tau_{\ell}}=\mathbf{n}'\ :\ \widetilde{Y}_{0}=\mathbf{n} ,\ \tau_{\ell}<\infty \right] =1-\mathcal{E}_{\mathbf{n}}^{\ell}$, we obtain the inequalities
\begin{align}
\left(1-\left(\min_{1\leqslant i\leqslant m}\gamma_{i}^{\ast}\right)^{\ell}\right)\mathcal{E}_{\mathbf{n}}^{\ell} + \left(\min_{1\leqslant i\leqslant m}\gamma_{i}^{\ast}\right)^{\ell} &\leqslant \bm{\gamma}_{\mathbf{n}}^{\ast} \leqslant \left(1-\left(\max_{1\leqslant i\leqslant m}\gamma_{i}^{\ast}\right)^{\ell}\right)\mathcal{E}_{\mathbf{n}}^{\ell} + \left(\max_{1\leqslant i\leqslant m}\gamma_{i}^{\ast}\right)^{\ell} . \label{sieq:Einequalities}
\end{align}
When $\overline{\gamma^{\ast}}<1$, we know that $\gamma_{i}^{\ast}<1$ for all $i=1,\dots ,m$ by \textbf{Eq.~\ref{sieq:gammaInfty}}, which gives the bound
\begin{align}
\mathcal{E}_{\mathbf{n}}^{\ell} &\geqslant \frac{\bm{\gamma}_{\mathbf{n}}^{\ast}-\left(\max_{1\leqslant i\leqslant m}\gamma_{i}^{\ast}\right)^{\ell}}{1-\left(\max_{1\leqslant i\leqslant m}\gamma_{i}^{\ast}\right)^{\ell}} . \label{sieq:calELowerBound}
\end{align}
Taking a sequence of fitness values for which $\overline{a}\downarrow\overline{b}$, the arguments of Lemma~\ref{lem:overlineAB} imply that $\overline{\gamma^{\ast}}\uparrow 1$. Moreover, taking the limit $\overline{\gamma^{\ast}}\uparrow 1$ in \textbf{Eq.~\ref{sieq:calELowerBound}} (using the expressions for $\gamma_{i}^{\ast}$ from \textbf{Eq.~\ref{sieq:gammaInfty}}), then gives
\begin{align}
\mathcal{E}_{\mathbf{n}}^{\ell} &\geqslant 1 - \frac{1}{\ell}\sum_{i=1}^{m}n_{i}\frac{a_{i}}{\min_{1\leqslant j\leqslant m}a_{j}} . \label{sieq:gammaOneInequality}
\end{align}
From the inequalities of \textbf{Eq.~\ref{sieq:Einequalities}} and \textbf{Eq.~\ref{sieq:gammaOneInequality}}, we thus have
\begin{align}\label{sieq:limitingE}
\lim_{\ell\rightarrow\infty}\mathcal{E}_{\mathbf{n}}^{\ell} &= 
\begin{cases}
\bm{\gamma}_{\mathbf{n}}^{\ast} & \gamma_{1}^{\ast},\dots ,\gamma_{m}^{\ast} < 1 , \\
1 & \gamma_{1}^{\ast},\dots ,\gamma_{m}^{\ast}\geqslant 1 .
\end{cases}
\end{align}
Therefore, it follows from Lemma~\ref{lem:overlineAB} and \textbf{Eq.~\ref{sieq:limitingE}} that
\begin{align}
\lim_{\ell\rightarrow\infty}\mathcal{E}_{\mathbf{n}}^{\ell} &= 
\begin{cases}
\left(\gamma_{1}^{\ast}\right)^{n_{1}}\cdots\left(\gamma_{m}^{\ast}\right)^{n_{m}} & \overline{a}>\overline{b} , \\
1 & \overline{a}\leqslant\overline{b} .
\end{cases}
\end{align}

Although we know the extinction probabilities in the limiting process, we cannot immediately conclude that these must coincide with the limit of the extinction probabilities in the Moran process. This situation is analogous to the use of branching processes approximations: while branching processes can be used to derive simple approximations of quantities in large populations \citep{nowak:BP:2006,wild:BMB:2010,leventhal:NC:2015}, one must also know that the use of such an approximation is valid for the process under consideration \citep{durrett:AAP:2009}. In the next section, we provide a sketch of how to find $\lim_{N\rightarrow\infty}\rho_{\mathbf{e}_{i},\mathbf{A}}^{N}$ using the extinction probabilities derived thus far.

\subsection{Large-population limit of fixation probability}
When the overall population size is $N$, let $\rho_{\mathbf{n},\ell}^{N}$ denote the probability of hitting a state with $\ell$ mutants when the process starts in state $\mathbf{n}$. To find $\rho_{\mathbf{e}_{i},\mathbf{A}}^{\infty}\coloneqq\lim_{N\rightarrow\infty}\rho_{\mathbf{e}_{i},\mathbf{A}}^{N}$, we first find $\lim_{\ell\rightarrow\infty}\lim_{N\rightarrow\infty}\rho_{\mathbf{e}_{i},\ell}^{N}$ and then argue that $\lim_{N\rightarrow\infty}\rho_{\mathbf{e}_{i},\mathbf{A}}^{N}=\lim_{\ell\rightarrow\infty}\lim_{N\rightarrow\infty}\rho_{\mathbf{e}_{i},\ell}^{N}$.

\begin{lemma}\label{lem:extinctionFixation}
For any $\ell$ and $\mathbf{n}$ with $\left|\mathbf{n}\right|\leqslant\ell$, $\lim_{N\rightarrow\infty}\rho_{\mathbf{n},\ell}^{N}$ exists and equals $1-\mathcal{E}_{\mathbf{n}}^{\ell}$.
\end{lemma}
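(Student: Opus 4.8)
The plan is to realize both $\rho^{N}_{\mathbf{n},\ell}$ and $1-\mathcal{E}^{\ell}_{\mathbf{n}}$ as absorption probabilities of finite absorbing Markov chains living on one common, $N$-independent state space, and then to deduce convergence from pointwise convergence of the transition kernels together with continuity of the map ``kernel $\mapsto$ absorption probability.'' First I would fix $\ell$ and restrict attention to $S_{\ell}\coloneqq\{\mathbf{n}\in\{0,1,\dots\}^{m}:\left|\mathbf{n}\right|\leqslant\ell\}$, which is finite and independent of $N$. Because \textbf{Eq.~\ref{sieq:patchFraction}} forces $N_{i}(N)\to\infty$, for all $N$ large enough (in particular $N>\ell$) every $\mathbf{n}\in S_{\ell}$ is a legitimate state of $\{Y_{t}\}$ with $\left|\mathbf{n}\right|<N$, so its one-step transition probabilities are the $Q^{\pm}_{i}(\mathbf{n})$ of \textbf{Eq.~\ref{sieq:scaledTransitions}}. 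Declaring the sets $L\coloneqq\{\left|\mathbf{n}\right|=\ell\}$ and $\{\mathbf{0}\}$ absorbing turns $\{Y_{t}\}$ into a finite absorbing chain with transient class $T\coloneqq\{\mathbf{n}:0<\left|\mathbf{n}\right|<\ell\}$; since $\left|Y_{t}\right|$ moves in steps of $\pm 1$ and $\mathbf{0}$ is absorbing, $\{Y_{t}\}$ reaches a state with $\ell$ mutants if and only if it reaches $L$ before $\mathbf{0}$, and this probability is unchanged by making $L$ absorbing, so $\rho^{N}_{\mathbf{n},\ell}$ is the $L$-absorption probability from $\mathbf{n}$ in the resulting finite chain. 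Performing the identical construction with the limiting probabilities $\widetilde{Q}^{\pm}_{i}$ of \textbf{Eq.~\ref{sieq:Qtransitions}} yields a finite absorbing chain on $S_{\ell}$ whose $\mathbf{0}$-absorption probabilities satisfy exactly the recurrence and boundary conditions of \textbf{Eq.~\ref{sieq:extinctionRecurrence}}; hence these are the $\mathcal{E}^{\ell}_{\mathbf{n}}$, and (once absorption is shown to be almost sure) $1-\mathcal{E}^{\ell}_{\mathbf{n}}$ is the $L$-absorption probability from $\mathbf{n}$ in this limiting chain.

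Next I would write down the two linear systems. Let $\mathbf{P}^{N}_{T}$ and $\widetilde{\mathbf{P}}_{T}$ be the substochastic matrices of transitions within $T$, and let $\mathbf{r}^{N}$, $\widetilde{\mathbf{r}}$ be the vectors whose $\mathbf{n}$-th entry is the one-step probability of moving from $\mathbf{n}$ directly into $L$ (these are supported on $\{\left|\mathbf{n}\right|=\ell-1\}$ and, by \textbf{Eq.~\ref{sieq:Qtransitions}}, satisfy $\mathbf{r}^{N}\to\widetilde{\mathbf{r}}$ entrywise). Standard absorbing-chain theory \citep{kemeny:S:1960} gives that $\mathbf{u}^{N}\coloneqq(\rho^{N}_{\mathbf{n},\ell})_{\mathbf{n}\in T}$ solves $(\mathbf{I}-\mathbf{P}^{N}_{T})\mathbf{u}^{N}=\mathbf{r}^{N}$ and that $\mathbf{v}\coloneqq(1-\mathcal{E}^{\ell}_{\mathbf{n}})_{\mathbf{n}\in T}$ solves $(\mathbf{I}-\widetilde{\mathbf{P}}_{T})\mathbf{v}=\widetilde{\mathbf{r}}$, provided $\mathbf{I}-\mathbf{P}^{N}_{T}$ and $\mathbf{I}-\widetilde{\mathbf{P}}_{T}$ are invertible. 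Invertibility follows from the fact that $\mathbf{0}$ is reachable from every state of $T$: whenever $n_{i}>0$ one has $Q^{-}_{i}(\mathbf{n})>0$ and $\widetilde{Q}^{-}_{i}(\mathbf{n})>0$ (a resident birth replacing a mutant in environment $i$), so iterating such steps lowers $\left|\mathbf{n}\right|$ to $0$ with positive probability; hence $\mathbf{P}^{N}_{T}$ and $\widetilde{\mathbf{P}}_{T}$ have spectral radius strictly below $1$, both chains are absorbed almost surely (which legitimizes $1-\mathcal{E}^{\ell}_{\mathbf{n}}$ as an $L$-absorption probability), and the two systems have unique solutions.

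Finally I would pass to the limit. By \textbf{Eq.~\ref{sieq:Qtransitions}}, $Q^{\pm}_{i}(\mathbf{n})\to\widetilde{Q}^{\pm}_{i}(\mathbf{n})$ for each fixed $\mathbf{n}$, and since $T$ is finite this gives $\mathbf{P}^{N}_{T}\to\widetilde{\mathbf{P}}_{T}$ entrywise. Because matrix inversion is continuous on the open set of invertible matrices and $\mathbf{I}-\widetilde{\mathbf{P}}_{T}$ is invertible, $\mathbf{I}-\mathbf{P}^{N}_{T}$ is invertible for all large $N$ with $(\mathbf{I}-\mathbf{P}^{N}_{T})^{-1}\to(\mathbf{I}-\widetilde{\mathbf{P}}_{T})^{-1}$; therefore $\mathbf{u}^{N}=(\mathbf{I}-\mathbf{P}^{N}_{T})^{-1}\mathbf{r}^{N}\longrightarrow(\mathbf{I}-\widetilde{\mathbf{P}}_{T})^{-1}\widetilde{\mathbf{r}}=\mathbf{v}$. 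Reading off coordinates yields $\lim_{N\to\infty}\rho^{N}_{\mathbf{n},\ell}=1-\mathcal{E}^{\ell}_{\mathbf{n}}$ for every $\mathbf{n}\in T$, and the boundary cases $\left|\mathbf{n}\right|=0$ (both sides $0$) and $\left|\mathbf{n}\right|=\ell$ (both sides $1$) are immediate.

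The only step with genuine content is the nondegeneracy in the middle paragraph: everything hinges on $\mathbf{I}-\mathbf{P}_{T}$, finite and limiting, remaining invertible — equivalently, on absorption being almost sure from every transient state — so that the absorption probabilities depend continuously on the kernel. Here this is painless because resident reproduction always furnishes a positive-probability route to extinction. Two minor housekeeping points are worth stating explicitly: that $S_{\ell}$ stabilizes to an $N$-independent finite set once $N_{i}(N)\geqslant\ell$ for all $i$ (guaranteed by $N_{i}(N)\to\infty$), and that it is harmless to compute $\rho^{N}_{\mathbf{n},\ell}$ in $\{Y_{t}\}$ rather than $\{X_{t}\}$, since the two share the same embedded jump chain and hence the same hitting probabilities.
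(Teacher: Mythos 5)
Your proposal is correct and follows essentially the same route as the paper: both work on the finite, $N$-independent state space $\left\{\mathbf{n}\,:\,\left|\mathbf{n}\right|\leqslant\ell\right\}$, use the entrywise convergence $Q_{i}^{\pm}\rightarrow\widetilde{Q}_{i}^{\pm}$, and pass to the limit in the linear recurrence for the hitting probabilities. The only difference is presentational: where the paper cites the fact that $\rho_{\mathbf{n},\ell}^{N}$ is a rational function of the transition probabilities, you make the underlying mechanism explicit via invertibility of $\mathbf{I}-\mathbf{P}_{T}$ (from reachability of $\mathbf{0}$) and continuity of matrix inversion, which also cleanly justifies uniqueness of the solution to the limiting system.
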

\begin{proof}
In the chain $\left\{Y_{t}\right\}_{t\geqslant 0}$, consider the stopping time $\tau_{\ell}\coloneqq\min\left\{t\geqslant 0\ :\ \left| Y_{t}\right| =\ell\right\}$ and let $\left\{Y_{t}^{\tau_{\ell}}\right\}_{t\geqslant 0}$ be the stopped chain defined by $Y_{t}^{\tau_{\ell}}\coloneqq Y_{\min\left\{t,\tau_{\ell}\right\}}$. For every $N$, $\left\{Y_{t}^{\tau_{\ell}}\right\}_{t\geqslant 0}$ is defined on the finite state space $\left\{\mathbf{n}\ :\ \left|\mathbf{n}\right|\leqslant\ell\right\}$, which, importantly, is independent of $N$. For $\mathbf{n}\in S$ with $\left|\mathbf{n}\right|\leqslant\ell$, we have
\begin{align}\label{sieq:recurrence}
\rho_{\mathbf{n},\ell}^{N} &= \begin{cases}
\displaystyle 0 & \left|\mathbf{n}\right| = 0 ; \\
\displaystyle 1 & \left|\mathbf{n}\right| = \ell ; \\
\displaystyle \sum_{i=1}^{m} Q_{i}^{-}\left(\mathbf{n}\right) \rho_{\mathbf{n}_{i}^{-},\ell}^{N} + \sum_{i=1}^{m} Q_{i}^{+}\left(\mathbf{n}\right) \rho_{\mathbf{n}_{i}^{+},\ell}^{N} & 0<\left|\mathbf{n}\right| <\ell .
\end{cases}
\end{align}
Since $Q_{i}^{\pm}\left(\mathbf{n}\right)$ are continuous functions of $N$ with limits for $i=1,\dots ,m$, it follows from the fact that $\rho_{\mathbf{n},\ell}^{N}$ is a rational function of $\left\{Q_{i}^{\pm}\left(\mathbf{n}\right)\right\}_{i=1}^{m}$ \citep[see][Appendix~A]{mcavoy:JRSI:2015} that $\lim_{N\rightarrow\infty}\rho_{\mathbf{n},\ell}^{N}$ exists for any $\mathbf{n}$ with $0<\left|\mathbf{n}\right| <\ell$. Letting $N\rightarrow\infty$ in \textbf{Eq.~\ref{sieq:recurrence}} and using \textbf{Eq.~\ref{sieq:Qtransitions}} gives the following expression for $\rho_{\mathbf{n},\ell}^{\infty}\coloneqq\lim_{N\rightarrow\infty}\rho_{\mathbf{n},\ell}^{N}$:
\begin{align}\label{sieq:recurrenceInfinity}
\rho_{\mathbf{n},\ell}^{\infty} &= \begin{cases}
\displaystyle 0 & \left|\mathbf{n}\right| = 0 ; \\
\displaystyle 1 & \left|\mathbf{n}\right| = \ell ; \\
\displaystyle \sum_{i=1}^{m} \widetilde{Q}_{i}^{-}\left(\mathbf{n}\right) \rho_{\mathbf{n}_{i}^{-},\ell}^{\infty} + \sum_{i=1}^{m} \widetilde{Q}_{i}^{+}\left(\mathbf{n}\right) \rho_{\mathbf{n}_{i}^{+},\ell}^{\infty} & 0<\left|\mathbf{n}\right| <\ell .
\end{cases}
\end{align}
As a result, we see from our analysis of \textbf{Eq.~\ref{sieq:extinctionRecurrence}} that $\lim_{N\rightarrow\infty}\rho_{\mathbf{n},\ell}^{N}=1-\mathcal{E}_{\mathbf{n}}^{\ell}$, which completes the proof.
\end{proof}

We now sketch a proof of the following limit:
\begin{align}
\lim_{N\rightarrow\infty}\rho_{\mathbf{e}_{i},\mathbf{A}}^{N} &= \begin{cases}0 & \overline{a}\leqslant\overline{b} , \\ 1-\gamma_{i}^{\ast} & \overline{a}>\overline{b} .\end{cases}
\end{align}

Consider the first-visit distribution, $\mu_{\mathbf{e}_{i},\ell}$, on $\left\{\mathbf{n}\in S\ :\ \left|\mathbf{n}\right| =\ell\right\}$. Specifically, for $\mathbf{n}\in S$ with $\left|\mathbf{n}\right| =\ell$,
\begin{align}
\mu_{\mathbf{e}_{i},\ell}\left(\mathbf{n}\right) &\coloneqq \mathbb{P}\left[ Y_{\tau_{\ell}}=\mathbf{n}\ :\ Y_{0}=\mathbf{e}_{i} ,\ \tau_{\ell}<\infty \right] . \label{sieq:firstVisitDist}
\end{align}
Using this distribution, we can write a mutant's fixation probability as
\begin{align}\label{sieq:decomposeFP}
\rho_{\mathbf{e}_{i},\mathbf{A}}^{N} &= \rho_{\mathbf{e}_{i},\ell}^{N}\sum_{\substack{\mathbf{n}\in S \\ \left|\mathbf{n}\right| =\ell}}\mu_{\mathbf{e}_{i},\ell}\left(\mathbf{n}\right)\rho_{\mathbf{n},\mathbf{A}}^{N} .
\end{align}
In particular, $\rho_{\mathbf{e}_{i},\mathbf{A}}^{N}\leqslant\rho_{\mathbf{e}_{i},\ell}^{N}$, so $\lim_{N\rightarrow\infty}\rho_{\mathbf{e}_{i},\mathbf{A}}^{N}=0$ whenever $\overline{a}\leqslant\overline{b}$ because $\lim_{\ell\rightarrow\infty}\lim_{N\rightarrow\infty}\rho_{\mathbf{e}_{i},\ell}^{N}=0$ (Lemma \ref{lem:extinctionFixation}).

Suppose now that $\overline{a}>\overline{b}$. Let $s\coloneqq\overline{a}/\overline{b}-1$, which is positive because $\overline{a}>\overline{b}$. Moreover, since
\begin{align}
\lim_{N\rightarrow\infty} \frac{\sum_{j=1}^{m}N_{j}\left(N\right) a_{j}}{\sum_{j=1}^{m}N_{j}\left(N\right) b_{j}} &= \overline{a}/\overline{b} ,
\end{align}
there exists $N^{\ast}$ for which $\frac{\sum_{j=1}^{m}N_{j}\left(N\right) a_{j}}{\sum_{j=1}^{m}N_{j}\left(N\right) b_{i}}>1+s/2$ whenever $N\geqslant N^{\ast}$. In what follows, we let $r\coloneqq 1+s/3$ and $r'\coloneqq 1+s/2$ so that $1<r<r'<\overline{a}/\overline{b}$. We also assume that $N$ is finite but at least $N^{\ast}$.

In the chain $\left\{Y_{t}\right\}_{t\geqslant 0}$, the probability of losing a mutant in state $\mathbf{n}\in S$ is
\begin{align}
L\left(\mathbf{n}\right) &\coloneqq \sum_{j=1}^{m}Q_{j}^{-}\left(\mathbf{n}\right) = \frac{\left|\mathbf{n}\right|\sum_{j=1}^{m}b_{j}\left(N_{j}-n_{j}\right)}{\left|\mathbf{n}\right|\sum_{j=1}^{m}b_{j}\left(N_{j}-n_{j}\right) + \left(N-\left|\mathbf{n}\right|\right)\sum_{j=1}^{m}a_{j}n_{j}} ,
\end{align}
and the probability of gaining a mutant in this state is simply $1-L\left(\mathbf{n}\right)$.

For $\ell <N$, consider again the stopping time $\tau_{\ell}\coloneqq\min\left\{t\geqslant 0\ :\ \left| Y_{t}\right| =\ell\right\}$, and let $\left\{Y_{t}^{\tau_{\ell}}\right\}_{t\geqslant 0}$ denote the stopped chain (i.e. $Y_{t}^{\tau_{\ell}}=Y_{\min\left\{t,\tau_{\ell}\right\}}$). In what follows, the notation $\mathbb{P}_{\mu}$ and $\mathbb{E}_{\mu}$ refers to the probability and expectation, respectively, when the chain $\left\{Y_{t}^{\tau_{\ell}}\right\}_{t\geqslant 0}$ has initial distribution $Y_{0}^{\tau_{\ell}}\sim\mu$. (If the subscript is a state, $\mathbf{n}$, instead of a distribution, then this notation indicates that the initial state of this chain is $\mathbf{n}$.) The main ingredient we will need to prove the lemma is to establish the existence of $\ell$ such that, for all $\ell '\geqslant\ell$,
\begin{align}
r^{-\ell '} &\geqslant \mathbb{E}_{\mu_{\mathbf{e}_{i},\ell '}}\left[ r^{-\left| Y_{t}^{\tau_{\ell}}\right|} \right] . \label{sieq:expInequality}
\end{align}

To establish \textbf{Eq.~\ref{sieq:expInequality}}, we first note that for $\left|\mathbf{n}\right| <N$,
\begin{align}
\frac{\sum_{j=1}^{m}a_{j}\frac{n_{j}}{\left|\mathbf{n}\right|}}{\sum_{j=1}^{m}b_{j}\frac{N_{j}-n_{j}}{N-\left|\mathbf{n}\right|}} > r' \iff \sum_{j=1}^{m} \left[ \left(1-\frac{\left|\mathbf{n}\right|}{N}\right) a_{j}+\frac{\left|\mathbf{n}\right|}{N}r'b_{j} \right] \frac{n_{j}}{\left|\mathbf{n}\right|} - r'\sum_{j=1}^{m}b_{j}\frac{N_{j}}{N} > 0 .
\end{align}
If $\left|\frac{n_{j}}{\left|\mathbf{n}\right|}-\frac{N_{j}}{N}\right| <\delta$ for every $j=1,\dots ,m$, then $\frac{n_{j}}{\left|\mathbf{n}\right|}>\frac{N_{j}}{N}-\delta$ in particular, which gives
\begin{align}
\sum_{j=1}^{m} &\left[ \left(1-\frac{\left|\mathbf{n}\right|}{N}\right) a_{j}+\frac{\left|\mathbf{n}\right|}{N}r'b_{j} \right] \frac{n_{j}}{\left|\mathbf{n}\right|} - r'\sum_{j=1}^{m}b_{j}\frac{N_{j}}{N} \nonumber \\
&> \sum_{j=1}^{m}\left[ \left(1-\frac{\left|\mathbf{n}\right|}{N}\right) a_{j}+\frac{\left|\mathbf{n}\right|}{N}r'b_{j} \right] \left(\frac{N_{j}}{N}-\delta\right) - r'\sum_{j=1}^{m}b_{j}\frac{N_{j}}{N} \nonumber \\
&= \left(1-\frac{\left|\mathbf{n}\right|}{N}\right)\left(\sum_{j=1}^{m}a_{j}\frac{N_{j}}{N}-r'\sum_{j=1}^{m}b_{j}\frac{N_{j}}{N}\right) - \delta\sum_{j=1}^{m}\left[ \left(1-\frac{\left|\mathbf{n}\right|}{N}\right) a_{j}+\frac{\left|\mathbf{n}\right|}{N}r'b_{j} \right] .
\end{align}
It follows that if $\delta$ is a fixed real number satisfying
\begin{align}
0 < \delta < \frac{\sum_{j=1}^{m}a_{j}\frac{N_{j}}{N}-r'\sum_{j=1}^{m}b_{j}\frac{N_{j}}{N}}{\sum_{j=1}^{m}\left[ a_{j}+\left(N-1\right) r'b_{j} \right]} , \label{sieq:deltaBound}
\end{align}
then $\frac{\sum_{j=1}^{m}a_{j}\frac{n_{j}}{\left|\mathbf{n}\right|}}{\sum_{j=1}^{m}b_{j}\frac{N_{j}-n_{j}}{N-\left|\mathbf{n}\right|}} > r'$ whenever $\left|\frac{n_{j}}{\left|\mathbf{n}\right|}-\frac{N_{j}}{N}\right| <\delta$ for every $j=1,\dots ,m$. Note that there exists such a $\delta$ in the range required by \textbf{Eq.~\ref{sieq:deltaBound}} because of our assumption that $N\geqslant N^{\ast}$, i.e. $\frac{\sum_{j=1}^{m}a_{j}N_{j}\left(N\right)}{\sum_{j=1}^{m}b_{j}N_{j}\left(N\right)}>r'$.

In every non-absorbing state, the probability of a mutant-type birth is bounded from below by some $p_{\ast}>0$ and above by some $p^{\ast}<1$, so it is possible for the chain to transition between any two non-absorbing states in finitely many steps. For every mutant (resp. resident) birth, the number of mutants in environment $j$ is increased (resp. decreased) by one with probability $\frac{N_{j}-n_{j}}{N-\left|\mathbf{n}\right|}$ (resp. $\frac{n_{j}}{\left|\mathbf{n}\right|}$); see \textbf{Eq.~\ref{sieq:scaledTransitions}}. Moreover, $\frac{n_{j}}{\left|\mathbf{n}\right|}\leqslant\frac{N_{j}-n_{j}}{N-\left|\mathbf{n}\right|}$ if and only if $\frac{n_{j}}{\left|\mathbf{n}\right|}\leqslant\frac{N_{j}}{N}$, which means that a mutant offspring is at least (resp. at most) as likely to replace a resident as a resident offspring is to replace a mutant in environment $j$ when $\frac{n_{j}}{\left|\mathbf{n}\right|}\leqslant\frac{N_{j}}{N}$ (resp. $\frac{n_{j}}{\left|\mathbf{n}\right|}\geqslant\frac{N_{j}}{N}$). A balance between the two is achieved when $\frac{n_{j}}{\left|\mathbf{n}\right|}=\frac{N_{j}}{N}$. Furthermore, if $k\neq j$, then a new mutant in environment $k$ increases the fraction $\frac{N_{j}-n_{j}}{N-\left|\mathbf{n}\right|}$, while a new resident in environment $k$ increases the fraction $\frac{n_{j}}{\left|\mathbf{n}\right|}$.

Let $\left(Y_{t}^{\tau_{\ell}}\right)_{j}$ denote the number of mutant-type individuals in environment $j$ (i.e. $n_{j}$ when $Y_{t}^{\tau_{\ell}}=\mathbf{n}$). Fix $\delta ,\epsilon >0$. From the heuristic in the previous paragraph, one can show that if $0<\xi <\left(1-r^{-2}\right)\left(\frac{1}{1+r}-\frac{1}{1+r'}\right)$, then there exists $\ell\geqslant 1$ such that whenever \textit{(i)} $\ell '\geqslant\ell$, \textit{(ii)} $\ell\leqslant k<N$, and \textit{(iii)} $n\geqslant 0$, we have
\begin{align}
\mathbb{P}_{\mu_{\mathbf{e}_{i},\ell '}} &\left[ r^{-1} - L\left(Y_{t}^{\tau_{\ell}}\right) - \left(1-L\left(Y_{t}^{\tau_{\ell}}\right)\right) r^{-2} > \xi \ :\ \left| Y_{t}^{\tau_{\ell}}\right| =k \right] \nonumber \\
&= \mathbb{P}_{\mu_{\mathbf{e}_{i},\ell '}}\left[ L\left(Y_{t}^{\tau_{\ell}}\right) < \frac{1}{1+r} - \frac{\xi}{1-r^{-2}} \ :\ \left| Y_{t}^{\tau_{\ell}}\right| =k \right] \nonumber \\
&\geqslant \mathbb{P}_{\mu_{\mathbf{e}_{i},\ell '}}\left[ \frac{\sum_{j=1}^{m}a_{j}\frac{\left(Y_{t}^{\tau_{\ell}}\right)_{j}}{\left| Y_{t}^{\tau_{\ell}}\right|}}{\sum_{j=1}^{m}b_{j}\frac{N_{j}-\left(Y_{t}^{\tau_{\ell}}\right)_{j}}{N-\left| Y_{t}^{\tau_{\ell}}\right|}} > r' \ :\ \left| Y_{t}^{\tau_{\ell}}\right| =k \right] \nonumber \\
&\geqslant \mathbb{P}_{\mu_{\mathbf{e}_{i},\ell '}}\left[ \sum_{j=1}^{m}\left| \frac{\left(Y_{t}^{\tau_{\ell}}\right)_{j}}{\left|Y_{t}^{\tau_{\ell}}\right|}-\frac{N_{j}}{N}\right| <\delta \ :\ \left| Y_{t}^{\tau_{\ell}}\right| =k \right] \nonumber \\
&> 1-\varepsilon .
\end{align}
(Again, the subscript in $\mathbb{P}_{\mu_{\mathbf{e}_{i},\ell '}}$ indicates that $Y_{0}^{\tau_{\ell}}\sim\mu_{\mathbf{e}_{i},\ell '}$.) Letting $0<\varepsilon <\frac{\xi}{\xi + 1-r^{-1}}$, we find that
\begin{align}
\mathbb{E}_{\mu_{\mathbf{e}_{i},\ell '}} &\left[ r^{-1} - L\left(Y_{t}^{\tau_{\ell}}\right) - \left(1-L\left(Y_{t}^{\tau_{\ell}}\right)\right) r^{-2} \ :\ \left| Y_{t}^{\tau_{\ell}}\right| =k \right] \nonumber \\
&\geqslant \xi\left(1-\varepsilon\right) + \left(r^{-1}-1\right)\varepsilon \nonumber \\
&\geqslant 0 . \label{sieq:eachCountPositive}
\end{align}
Note that in the arguments preceding \textbf{Eq.~\ref{sieq:eachCountPositive}}, we assumed that $k<N$. However, if $\left| Y_{t}^{\tau_{\ell}}\right| =N$, then $L\left(Y_{t}^{\tau_{\ell}}\right) =0$ and we have $\mathbb{E}_{\mu_{\mathbf{e}_{i},\ell '}}\left[ r^{-1} - L\left(Y_{t}^{\tau_{\ell}}\right) - \left(1-L\left(Y_{t}^{\tau_{\ell}}\right)\right) r^{-2} \ :\ \left| Y_{t}^{\tau_{\ell}}\right| =N \right] =r^{-1}-r^{-2}>0$.

From \textbf{Eq.~\ref{sieq:eachCountPositive}}, it follows that
\begin{align}
\mathbb{E}_{\mu_{\mathbf{e}_{i},\ell '}} &\left[r^{-\left| Y_{t}^{\tau_{\ell}}\right|} - \mathbb{E}\left[ r^{-\left|Y_{t+1}^{\tau_{\ell}}\right|} \ :\ Y_{t}^{\tau_{\ell}} \right] \right] \nonumber \\
&= \sum_{k=\ell}^{N} \mathbb{E}_{\mu_{\mathbf{e}_{i},\ell '}}\left[r^{-\left| Y_{t}^{\tau_{\ell}}\right|} - \mathbb{E}\left[ r^{-\left|Y_{t+1}^{\tau_{\ell}}\right|} \ :\ Y_{t}^{\tau_{\ell}} \right] \ :\ \left| Y_{t}^{\tau_{\ell}}\right| =k \right] \mathbb{P}_{\mu_{\mathbf{e}_{i},\ell '}}\left[ \left| Y_{t}^{\tau_{\ell}}\right| =k \right] \nonumber \\
&= \sum_{k=\ell}^{N} r^{-\left(k-1\right)} \mathbb{E}_{\mu_{\mathbf{e}_{i},\ell '}}\left[r^{-1} - L\left(Y_{t}^{\tau_{\ell}}\right) - \left(1-L\left(Y_{t}^{\tau_{\ell}}\right)\right) r^{-2} \ :\ \left| Y_{t}^{\tau_{\ell}}\right| =k \right] \mathbb{P}_{\mu_{\mathbf{e}_{i},\ell '}}\left[ \left| Y_{t}^{\tau_{\ell}}\right| =k \right] \nonumber \\
&\geqslant 0 .
\end{align}
By induction, we then obtain the desired inequality, $r^{-\ell '}\geqslant\mathbb{E}_{\mu_{\mathbf{e}_{i},\ell '}}\left[ r^{-\left| Y_{t}^{\tau_{\ell}}\right|} \right]$ (\textbf{Eq.~\ref{sieq:expInequality}}). Furthermore, since the Markov chain $\left\{Y_{t}^{\tau_{\ell}}\right\}_{t\geqslant 0}$ is finite, we can take the limit of \textbf{Eq.~\ref{sieq:expInequality}} as $t\rightarrow\infty$ to get
\begin{align}
r^{-\ell '} &\geqslant \sum_{\substack{\mathbf{n}\in S \\ \left|\mathbf{n}\right| =\ell '}} \mu_{\mathbf{e}_{i},\ell '}\left(\mathbf{n}\right) \Big(\mathbb{P}_{\mathbf{n}}\left[\left| Y_{\tau_{\ell}}\right| =N\right] r^{-N}+\left(1-\mathbb{P}_{\mathbf{n}}\left[\left| Y_{\tau_{\ell}}\right| =N\right]\right) r^{-\ell}\Big) . \label{sieq:limitingInequality}
\end{align}

\textbf{Eq.~\ref{sieq:limitingInequality}} holds for all $\ell '\geqslant\ell$, which means, in particular, we can let $\ell '=2\ell$ to see that
\begin{align}
1-\sum_{\substack{\mathbf{n}\in S \\ \left|\mathbf{n}\right| =2\ell}} \mu_{\mathbf{e}_{i},2\ell}\left(\mathbf{n}\right)\mathbb{P}_{\mathbf{n}}\left[\left| Y_{\tau_{\ell}}\right| =N\right] &\leqslant r^{-\ell} - r^{-\left(N-\ell\right)}\sum_{\substack{\mathbf{n}\in S \\ \left|\mathbf{n}\right| =2\ell}} \mu_{\mathbf{e}_{i},2\ell}\left(\mathbf{n}\right) \mathbb{P}_{\mathbf{n}}\left[\left| Y_{\tau_{\ell}}\right| =N\right]
\end{align}
whenever $\ell$ is sufficiently large. Thus, $\lim_{\ell\rightarrow\infty}\lim_{N\rightarrow\infty}\sum_{\substack{\mathbf{n}\in S \\ \left|\mathbf{n}\right| =2\ell}} \mu_{\mathbf{e}_{i},2\ell}\left(\mathbf{n}\right)\mathbb{P}_{\mathbf{n}}\left[\left| Y_{\tau_{\ell}}\right| =N\right] =1$. Since
\begin{align}
\mathbb{P}_{\mathbf{n}}\left[\left| Y_{\tau_{\ell}}\right| =N\right] &\leqslant \rho_{\mathbf{n},\mathbf{A}}^{N} ,
\end{align}
we also have $\lim_{\ell\rightarrow\infty}\lim_{N\rightarrow\infty}\sum_{\substack{\mathbf{n}\in S \\ \left|\mathbf{n}\right| =2\ell}} \mu_{\mathbf{e}_{i},2\ell}\left(\mathbf{n}\right)\rho_{\mathbf{n},\mathbf{A}}^{N}=1$. Therefore, by \textbf{Eq.~\ref{sieq:decomposeFP}} and Lemma \ref{lem:extinctionFixation},
\begin{align}
\lim_{N\rightarrow\infty}\rho_{\mathbf{e}_{i},\mathbf{A}}^{N} &= \lim_{\ell\rightarrow\infty}\lim_{N\rightarrow\infty}\rho_{\mathbf{e}_{i},2\ell}^{N}\sum_{\substack{\mathbf{n}\in S \\ \left|\mathbf{n}\right| =2\ell}}\mu_{\mathbf{e}_{i},2\ell}\left(\mathbf{n}\right)\rho_{\mathbf{n},\mathbf{A}}^{N} \nonumber \\
&= \lim_{\ell\rightarrow\infty}\rho_{\mathbf{e}_{i},2\ell}^{\infty} \nonumber \\
&= 1-\gamma_{i}^{\ast} .
\end{align}


\section{Linear dispersal structures}\label{sec:appendixB}
In the main text, we assumed that the dispersal structure was represented by a complete graph. By ignoring dispersal heterogeneity, we could focus on the effects of environmental fitness heterogeneity on a mutant's fixation probability. On a complete graph, moments of the fitness distributions for the mutant and resident types, including the arithmetic mean and standard deviation, determine the fate of a rare mutant.

When environmental heterogeneity is generalized to arbitrary dispersal graphs, where individuals see potentially only a small number of neighbors, the evolutionary dynamics become more complex. In this case, the distribution of fitness values is still important, but it is also matters where different environments are located relative to each other. Thus, the general question of how environmental fitness heterogeneity affects the fate of a mutant is determined by both the moments of the individual fitness distributions and the spatial correlations between these values. A thorough analysis of these models is outside the scope of this paper.

However, to illustrate the difference in the dynamics and to compare with the results on the complete graph, we consider a bimodal distribution of fitness values of a cycle. A cycle is a one-dimensional, periodic spatial structure in which every individual has exactly two neighbors \citep{ohtsuki:PRSB:2006}. As before, $a_{i} \in \left\{a_{1},a_{2}\right\}$ and $b_{i} \in \left\{b_{1},b_{2}\right\}$ for some $a_{1},a_{2},b_{1},b_{2}>0$. We assume a uniform, spatially-periodic distribution of fitness values, such that for every node in environment $1$, the two neighboring nodes are in environment $2$, and vice versa (see \textbf{Fig.~\ref{fig:cycle}}). (This distribution is in fact a good estimate for the evolution on a cycle with random fitness values derived from the same bimodal distribution.)

\begin{figure}
\centering
\includegraphics[width=0.35\textwidth]{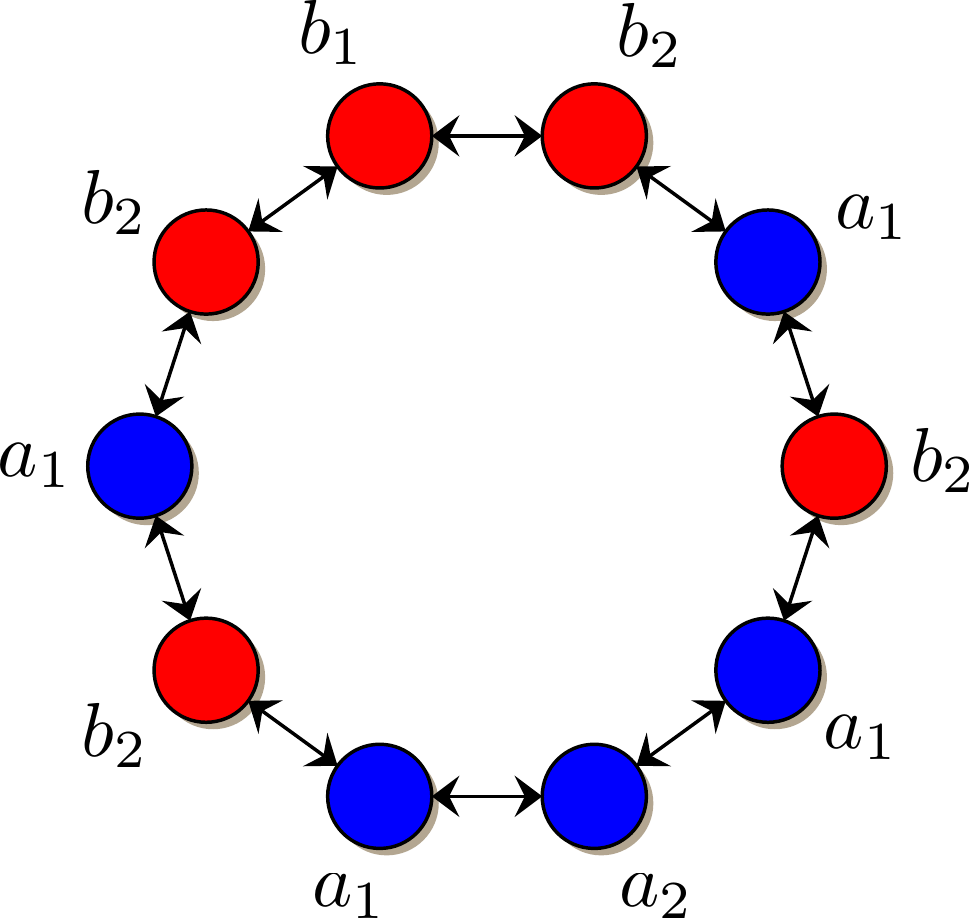}
\caption{Environmental heterogeneity on the cycle. The dispersal graph, a cycle, is a one-dimensional, periodic structure, meaning an individual's offspring can be propagated only to one of its two neighbors. In the case we consider, the environments are alternating, so that every environment of type $1$ has two neighbors of type $2$, and vice versa.\label{fig:cycle}}
\end{figure}

We have solved the Kolmogorov backward equation numerically for this model. If mutant fitness varies while resident fitness is constant over all spatial locations (i.e. $a_{1,2}= \bar{a}\pm\Delta_{a}$ and $b_{1,2} = 1$), then we observe that a mutant's fixation probability is decreased as the mean fitness of the mutants is kept constant and the standard deviation of the fitness values is increased (just like when the dispersal structure is a complete graph). The effect on fixation probability, however, is more significant than it is in the case of a complete dispersal graph. In fact, as $\Delta_{a} \to \Delta_{\rm max} =\bar{a}$, the fixation probability, $\rho_{A}({\bf a},{\bf b})$, approaches zero (see \fig{cycleSuppression}). The fitness parameters are chosen similar to the complete graph, and the population size is $N=100$.

\begin{figure}
\centering
\includegraphics[width=0.8\textwidth]{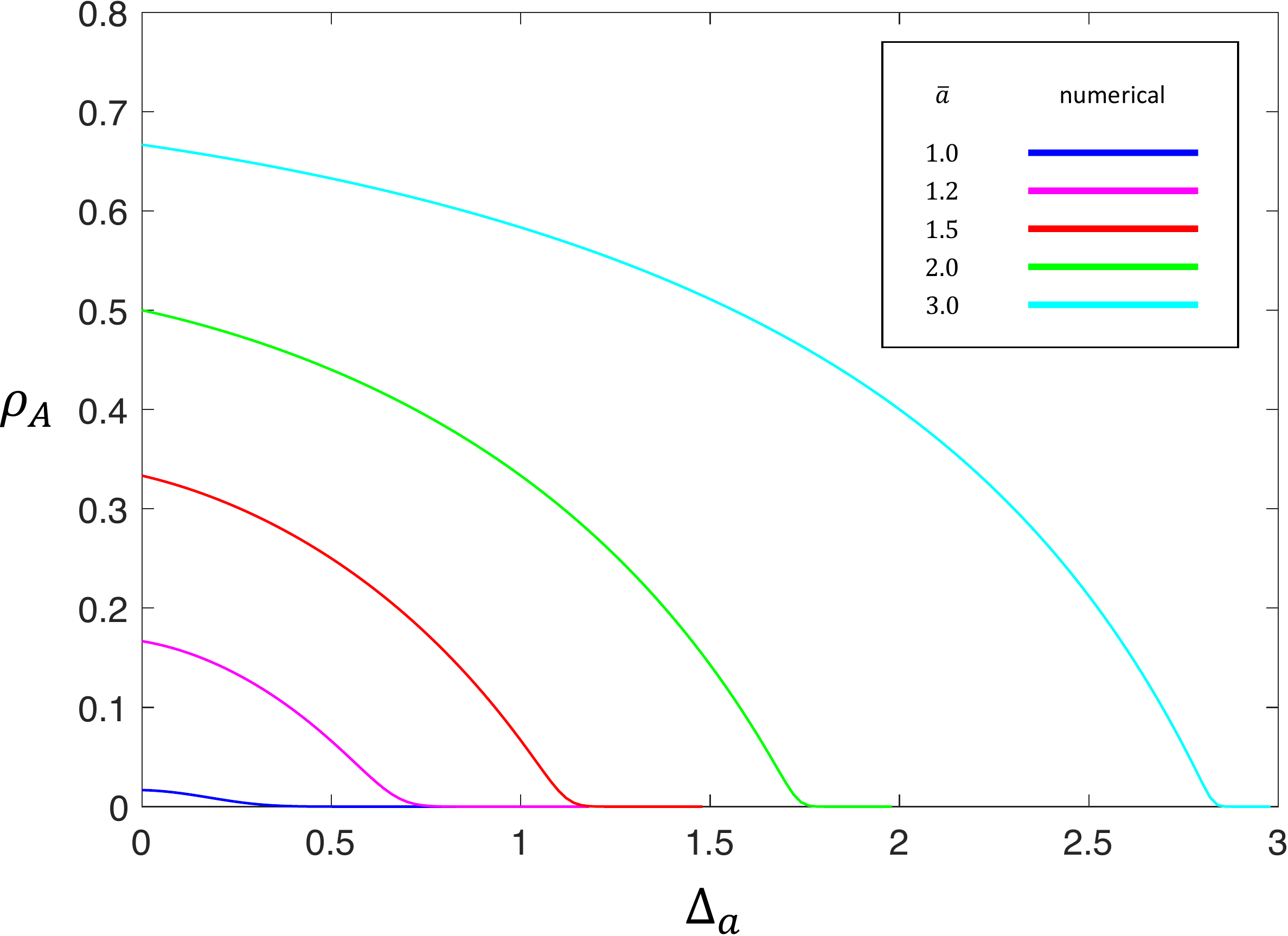}
\caption{Fixation probability of the mutant type, $A$, as a function of (half) the width of the mutant's fitness distribution, $\Delta_{a}$. The fitness values for the mutant are either $\overline{a}-\Delta_{a}$ or $\overline{a}+\Delta_{a}$. The fitness values of the resident are set to unity and do not change across population. Every location in environment $1$ (resp. $2$) is a neighbor to two individuals in environment $2$ (resp. $1$). The population size is $N=100$, and $\overline{a}\in\left\{1.0,1.2,1.5,2.0,3.0\right\}$. The results are obtained from exact solutions of the Kolmogorov equation for fixation probability. As $\Delta_{a}$ grows, a mutant's fixation probability decreases, consistent with suppression. However, the effect is more significant than it is when the dispersal structure is a complete graph.\label{fig:cycleSuppression}}
\end{figure}

We also considered the effects of heterogeneity in resident fitness. We let $a_{1,2}=\bar{a}$ and $b_{1,2}=1 \pm \Delta_{b}$. Just as we observed for the complete graph, resident heterogeneity now increases the fixation probability of a randomly-placed mutant. However, now this effect is not restricted to small population sizes. In \fig{cycleAmplification}, the results are shown for $N=100$ and various values of mutant fitness ($\bar{a} =0.8,0.9,1.0,1.1$). Curiously, resident heterogeneity increases the fixation probability of deleterious mutants ($\bar{a}=0.8,0.9$, for example). For large enough values of $\Delta_{b}$, deleterious mutants (in a uniform environment) become strongly advantageous, which can be seen in \fig{cycleAmplification}.

\begin{figure}
\centering
\includegraphics[width=0.8\textwidth]{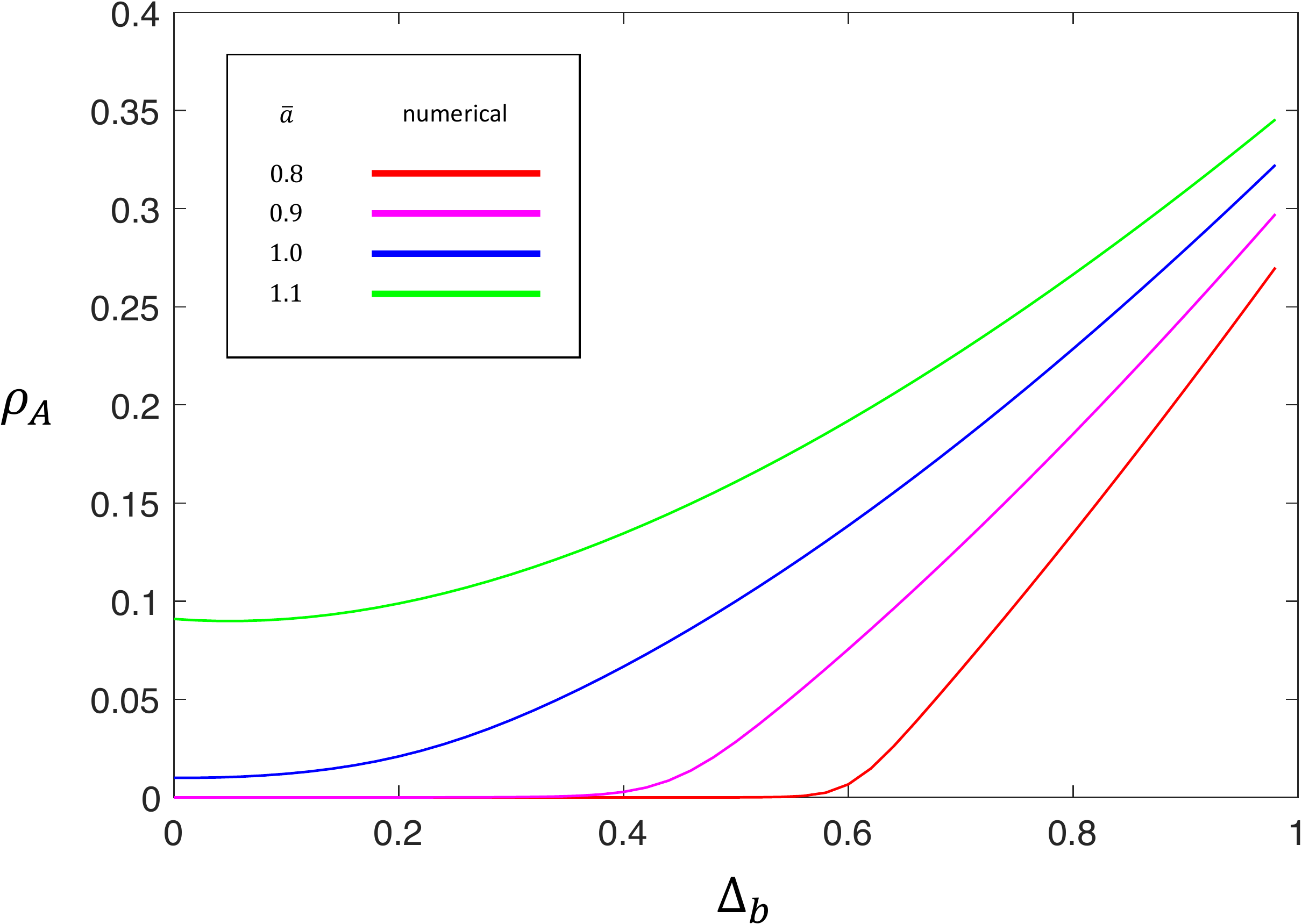}
\caption{Fixation probability of the mutant type, $A$, as a function of (half) the width of the resident's fitness distribution, $\Delta_{b}$. The fitness values for the resident are either $\overline{b}-\Delta_{b}$ or $\overline{b}+\Delta_{b}$, where $\bar{b}=1$. Again, every location in environment $1$ (resp. $2$) is a neighbor to two individuals in environment $2$ (resp. $1$). The population size is $N=100$, and $\overline{a}\in\left\{0.8,0.9,1.0,1.1\right\}$. The results are obtained from exact solutions of the Kolmogorov equation for fixation probability. As $\Delta_{b}$ grows, a near-neutral mutant's fixation probability increases, consistent with amplification.\label{fig:cycleAmplification}}
\end{figure}

\section*{Acknowledgments}
We are grateful to the referees for many constructive comments on earlier versions of the manuscript. This project was supported by the Office of Naval Research, grant N00014-16-1-2914.

\end{document}